\newcommand\mbZ{\mbox{$\mathbb{Z}$}}
\newcommand\mbC{\mbox{$\mathbb{C}$}}
\newcommand {\ie} {\textit{i.e.}\xspace}
\newcommand {\st} {\textit{s.t.}\xspace}
\newcommand{\Z}{\mathbb{Z}}
\newcommand{\R}{\mathbb{R}}
\def\01{\{0,1\}}
\newcommand{\sn}[1]{\|#1\|}
\newcommand{\fro}[1]{\|#1\|_F}
\newcommand{\trn}[1]{\|#1\|_{tr}}
\newcommand{\Tr}{\mathrm{Tr}}
\newcommand{\rk}{\mathrm{rk}}
\newcommand{\size}{\mathrm{size}}
\newcommand{\disc}{\mathrm{disc}}
\newcommand{\hp}{\bullet}
\newcommand{\comp}{\circ}
\newcommand{\PARITY}{\mathrm{PARITY}}
\newcommand{\DT}{\mathrm{DT}}
\newcommand{\RT}{\mathrm{RT}}
\newcommand{\QT}{\mathrm{QT}}
\newcommand{\braket}[2]{\langle#1, #2\rangle}
\newtheorem{fact}[theorem]{Fact}
\begin{document}
\title{Composition theorems in communication complexity}
\date{}
\author{Troy Lee\inst{1} \and Shengyu Zhang\inst{2}}
\institute{Rutgers University, \email{troyjlee@gmail.com}  \and The Chinese University of Hong Kong,  \email{syzhang@cse.cuhk.edu.hk}}
\maketitle

\begin{abstract}
A well-studied class of functions in communication complexity are
composed functions of the form $(f \comp g^n)
(x,y)=f(g(x^1, y^1), \ldots, g(x^n,y^n))$.  This is a rich family of 
functions which encompasses many of the important examples in the literature.  
It is thus of great interest to understand what properties of $f$ and $g$ affect the communication
complexity of $(f \comp g^n)$, and in what way. 

Recently, Sherstov \cite{She09b} and independently Shi-Zhu \cite{SZ09b} 
developed conditions on the inner function $g$ which imply that the quantum 
communication complexity of $f \comp g^n$ is at least the approximate 
polynomial degree of $f$.  We generalize both of these frameworks.  We show 
that the pattern matrix framework of Sherstov works whenever the inner function
$g$ is {\em strongly balanced}---we say that $g: X \times Y \rightarrow \{-1,+1\}$ is 
strongly balanced if all 
rows and columns in the matrix $M_g=[g(x,y)]_{x,y}$ sum to zero.  
This result strictly generalizes the pattern matrix framework of Sherstov \cite{She09b},
which has been a very useful idea in a variety of settings
\cite{She08b,RS08,Cha07,LS09,CA08,BHN09}.  

Shi-Zhu require that the inner function $g$ has small {\em spectral discrepancy}, 
a somewhat awkward condition to verify.  We relax this to the usual notion of discrepancy.

We also enhance the framework of composed functions studied so far
by considering functions $F(x,y) = f(g(x,y))$, where the range of $g$ is a 
group $G$.  When $G$ is Abelian, the analogue of the strongly balanced 
condition becomes a simple group invariance property of $g$.  We are able to 
formulate a general lower bound on $F$ whenever $g$ satisfies this 
property.
\end{abstract}

\section{Introduction}
Communication complexity studies the minimum amount of
communication needed to compute a function whose input variables
are distributed between two or more parties.  Since the
introduction by Yao \cite{Yao79} of an elegant mathematical model
to study this question, communication complexity has grown into a
rich field both because of its inherent mathematical interest and
also its application to many other models of computation.  See the
textbook of Kushilevitz and Nisan \cite{KN97} for a comprehensive
introduction to the field.

In analogy with traditional computational complexity classes, one
can consider different models of communication complexity based on
the resources available to the parties.  Besides the standard
deterministic model, of greatest interest to us will be a
randomized version of communication complexity, where the parties
have access to a source of randomness and are allowed to err with
some small constant probability, and a quantum model where the
parties share a quantum channel and the cost is measured in
qubits.

Several major open questions in communication complexity ask about
how different complexity measures relate to each other.  The log rank
conjecture, formulated by Lov\'{a}sz and Saks \cite{LS88}, asks if
the deterministic communication complexity of a Boolean function
$F : X \times Y \rightarrow \01$ is upper bounded by a polynomial
in the logarithm of the rank of the matrix $[F(x,y)]_{x,y}$.
Another major open question is if randomized and quantum
communication complexity are polynomially related for all total
functions.  We should mention here that the assumption of the
function being total is crucial as an exponential separation is
known for a partial function \cite{Raz99}.

One approach to these questions has been to study them for
restricted classes of functions. Many functions of interest are
{\em block composed} functions.  For finite sets $X,Y$, and $E$, a  
function $f: E^n \rightarrow \{-1,+1\}$, and a function $g : X \times Y \rightarrow
E$, the block composition of $f$ and $g$ is the function $f \comp
g^n: X^n \times Y^n \to \{-1,+1\}$ defined by $(f \comp
g^n)(x,y)=f(g(x^1, y^1), \ldots, g(x^n, y^n))$ where $(x^i,y^i)
\in X \times Y$ for all $i=1, \ldots, n$. For example, if $E=\{-1,+1\}$, 
the inner product function results when $f$ is PARITY and $g$ is AND,
set-intersection when $f$ is OR and $g$ is AND, and the equality
function when $f$ is AND and $g$ is the function IS-EQUAL, which
is one if and only if $x=y$.

In a seminal paper, Razborov \cite{Raz03} gave tight bounds for
the bounded-error quantum communication complexity of block composed
functions where the outer function $f$ is symmetric and the inner
function $g$ is bitwise AND.  In particular, this result showed that
randomized and quantum communication complexity are polynomially
related for such functions.

More recently, very nice frameworks have been developed by
Sherstov \cite{She07,She09b} and independently by Shi and Zhu
\cite{SZ09b} to bound the quantum complexity of block composed
functions that goes beyond the case of symmetric $f$ to work for
any $f$ provided the inner function $g$ satisfies
certain technical conditions. When $g$ satisfies these conditions, 
this framework allows one to lower bound the quantum communication 
complexity of $f \comp g^n$ in terms of the approximate polynomial 
degree of $f$, a classically well-studied measure.  Shi and Zhu are 
able to get a bound on $f \comp g^n$ in terms of the approximate 
degree of $f$ whenever $g$ is sufficiently ``hard''---unfortunately, the 
hardness condition they need is in terms of ``spectral discrepancy,'' 
a quantity which is somewhat difficult to bound, and their bound 
requires that $g$ is a function on at least $\Omega(\log(n/d))$ bits,  where $d$ is the
approximate polynomial degree of $f$. Because of this, Shi-Zhu are
only able to reproduce Razborov's results with a polynomially
weaker bound.

Sherstov developed so-called {\it pattern matrices} which are the
matrix representation of a block composed function when $g$ is a
fixed function of a particularly nice form.  Namely, in a pattern
matrix the inner function $g: \{-1,+1\}^k \times ([k] \times
\{-1,+1\}) \to \{-1,+1\}$ is parameterized by a positive integer
$k$ and defined by $g(x,(i,b))=x_i \cdot b$, where $x_i$ denotes
the $i^{th}$ bit of $x$. In other words, the first argument of $g$ is a $k$ 
bit string $x$, and the second argument selects a bit of $x$ or its negation. 
So here $X=\{-1,+1\}^k$, $Y = [k] \times
\{-1,+1\}$ and the intermediate set $E$ is $\{-1,+1\}$.  With this $g$, 
Sherstov shows that the approximate polynomial degree of $f$ is a lower 
bound on the quantum communication complexity of $f \comp g^n$,
for any function $f$. Though seemingly quite special, pattern matrices 
have proven to be an extremely useful concept.  First, they give a simple proof of
Razborov's tight lower bounds for $f(x\wedge y)$ for symmetric
$f$. Second, they have also found many other applications in
unbounded-error communication complexity \cite{She08b,RS08} and
have been successfully extended to multiparty communication
complexity \cite{Cha07,LS09,CA08,BHN09}.

A key step in both the works of Sherstov and Shi-Zhu is to bound
the spectral norm of a sum of matrices $\sn{\sum_i {B_i}}$.  This
is the major step where these works differ.  Shi-Zhu apply the
triangle inequality to bound this as $\sn{\sum_i {B_i}} \leq
\sum_i \sn{B_i}$. On the other hand, Sherstov observes that in the case of
pattern matrices the terms of this sum are mutually orthogonal,
\ie $B_i^\dag B_j = B_i B_j^\dag = 0$ for all $i\neq j$.  In this case,
one has a stronger bound on the spectral norm $\sn{\sum_i {B_i}} =
\max_i \sn{B_i}$.

In this paper, we extend both of the frameworks of Sherstov and Shi-Zhu.  
In the case of Shi-Zhu, we are able to reprove their theorem with 
the usual notion of discrepancy instead of the somewhat awkward spectral discrepancy they
use.  The main observation we make is that as all Shi-Zhu use in this step is the 
triangle inequality, we can repeat the argument with any norm here, including discrepancy itself.

In the case of pattern matrices, special properties of the spectral norm are used, 
namely the fact about the spectral norm of a sum of orthogonal matrices.  
We step back to see what key features of a pattern matrix lead to this 
orthogonality property.
We begin with the Boolean case, that is, where the intermediate set $E$ is taken 
to be $\{-1,+1\}$.  In this case, a crucial concept is the notion of a {\em strongly 
balanced} function.  We say that $g : X \times Y \rightarrow \{-1,+1\}$ is strongly
balanced if in the sign matrix $M_g[x,y]=g(x,y)$ all rows and all columns sum to zero.
We show that whenever the inner function $g$ is strongly balanced, the key orthogonality 
condition holds; this implies that whenever $g$ is strongly balanced and the communication 
matrix of $g$ has rank larger than one, the approximate degree of the outer function $f$ is a lower 
bound on the quantum communication complexity of $f \comp g^n$.  

The requirement that the communication matrix of $g$ has rank larger than one is necessary for
such a statement.  For example, when $g(x,y)=\oplus(x,y)$ is the 
XOR function on one bit, then the communication complexity of $\PARITY \comp g^n$ 
is constant, while PARITY has linear approximate polynomial degree.  It turns out that when 
$g$ is rank-one, the appropriate measure of the complexity of $f \comp g^n$ is no longer the 
approximate degree of $f$, but the minimum $\ell_1$ norm of Fourier coefficients of a function 
entrywise close to $f$; see the survey \cite{LS09d} for a description of this case.  

We also consider the general case where the intermediate set is any group $G$. That is, we 
consider functions $F(x,y) = f(g(x,y))$, where $g: X \times Y \rightarrow G$ for a group $G$ and 
$f: G \rightarrow \{-1, +1\}$ is a class function on $G$. The case $E = \{-1,+1\}$ discussed above 
corresponds to taking the group $G=\Z_2^n$.  When $G$ is a general Abelian group, the key 
orthogonality condition requires more than that the matrix $M_g[x,y]=g(x,y)$ is strongly balanced; 
still, it admits a nice characterization in terms of group invariance.  A multiset $T\in G\times G$ is 
said to be $G$-invariant if $(s,s)T = T$ for all $s\in G$. The orthogonality condition will hold if and 
only if all pairs of rows and all pairs of columns of $M_g$ (when viewed as multisets) are $G$-
invariant. One can generalize the results discussed above to this general setting with appropriate 
modifications. In the case that $G = \Z_2^n$, the $G$-invariant condition degenerates to the 
strongly balanced requirement of $M_g$.

\section{Preliminaries}
\label{sec:main_preliminaries}
All logarithms are base two.  For a complex number $z=a+ib$ we let 
$\bar z=a- ib$ denote the complex conjugate of $z$ and $|z|=\sqrt{a^2+b^2}$ 
and $\mathrm{Re}(z)=a$.  

\subsection{Complexity measures}
We will make use of several complexity measures of functions and
matrices. Let $f: \{-1,+1\}^n \rightarrow \{-1,+1\}$ be a
function. For $T \subseteq \01^n$, the Fourier
coefficient of $f$ corresponding to the character $\chi_T$ is
$\hat f_T=\frac{1}{2^n} \sum_x f(x) \chi_T(x)=\frac{1}{2^n} \sum_x f(x) \prod_{i \in T} x_i.$
The {\em degree} of $f$ as a polynomial, denoted $\deg(f)$, is the
size of a largest set $T$ for which $\hat f_T \ne 0$. 

We will need some notations for matrices. We reserve $J$ for the all ones matrix, 
whose size will be determined by the context.  For a matrix $A$ let
$A^\dag$ denote the conjugate transpose of $A$.  We use $A \hp B$ for the 
entrywise product of $A,B$, and $A \otimes B$ for the tensor product.  
If $A$ is an $m$-by-$n$ matrix then we say that $\size(A)=mn$.  We 
use $\braket{A}{B}=\Tr(AB^\dag)$ for the inner product of $A$ and $B$.

Let $\|A\|_1$ be the $\ell_1$ norm of $A$, i.e.\ sum of the absolute values 
of entries of $A$, and $\|A\|_\infty$ the $\ell_\infty$ norm.  
For a positive semidefinite matrix $M$ let $\lambda_1(M) \ge \cdots \ge \lambda_n(M) \ge 0$ be
the eigenvalues of $M$.  We define the $i^{th}$ singular value of
$A$, denoted $\sigma_i(A)$, as
$\sigma_i(A)=\sqrt{\lambda_i(AA^\dag)}$.  The rank of $A$, denoted
$\rk(A)$ is the number of nonzero singular values of $A$.  We will
use several matrix norms. The spectral or operator norm is the largest singular value $\sn{A}=\sigma_1(A)$, the trace norm is the summation of all singular values $\trn{A}=\sum_i \sigma_i(A)$, and the Frobenius norm is the $\ell_2$ norm of the singular values $\fro{A}=\sqrt{\sum_i \sigma_i(A)^2}$.

When $AB^\dag =A^\dag B=0$ we will say that $A,B$ are {\em orthogonal}. Please note the difference with the common use of this term, which usually means $\braket{A}{B}=0$. The following facts are easily seen.
\begin{fact}\label{fact:ortho}
Let $A,B$ be two matrices of the same dimensions and suppose that
$AB^\dag=A^\dag B=0$.  Then
\begin{align*}
&\rk(A+B) =\rk(A)+\rk(B), \ \trn{A+B} =\trn{A}+\trn{B}, \ \sn{A+B} =\max \{\sn{A},\sn{B}\}.
\end{align*}
\end{fact}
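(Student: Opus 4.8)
The plan is to reduce all three identities to one structural fact: the list of singular values of $A+B$, counted with multiplicity, is the concatenation of the list for $A$ with the list for $B$. Granting this, the rank is the number of nonzero entries of that list, the trace norm is its sum, and the spectral norm is its maximum, so the three claimed equalities all follow at once.

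To establish the structural fact, first note that $AB^\dag=0$ forces $BA^\dag=(AB^\dag)^\dag=0$, whence $(A+B)(A+B)^\dag = AA^\dag+AB^\dag+BA^\dag+BB^\dag = AA^\dag+BB^\dag$. Now $AA^\dag$ and $BB^\dag$ are positive semidefinite, and their ranges are the column spaces of $A$ and of $B$ respectively; the hypothesis $A^\dag B=0$ says precisely that these two column spaces are orthogonal. Two positive semidefinite matrices with orthogonal ranges commute (indeed their product is $0$), hence are simultaneously unitarily diagonalizable in a single orthonormal basis, and on such a basis the sum acts as $AA^\dag$ on the column space of $A$, as $BB^\dag$ on the column space of $B$, and as $0$ elsewhere. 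Thus the eigenvalue multiset of $(A+B)(A+B)^\dag$ is the union of those of $AA^\dag$ and of $BB^\dag$; taking square roots yields the structural fact, and the three identities are immediate.

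Equivalently, and perhaps more transparent to write down, one can exhibit a singular value decomposition of $A+B$ directly. Write reduced decompositions $A=\sum_i \sigma_i u_i v_i^\dag$ and $B=\sum_j \tau_j p_j q_j^\dag$ with all $\sigma_i,\tau_j>0$. Then $A^\dag B=0$ is equivalent to $\braket{u_i}{p_j}=0$ for all $i,j$, and $AB^\dag=0$ to $\braket{v_i}{q_j}=0$ for all $i,j$; hence $\{u_i\}\cup\{p_j\}$ is orthonormal and $\{v_i\}\cup\{q_j\}$ is orthonormal, so $A+B=\sum_i \sigma_i u_i v_i^\dag+\sum_j \tau_j p_j q_j^\dag$ is a bona fide singular value decomposition, from which one reads off $\rk(A+B)$, $\trn{A+B}$, and $\sn{A+B}$.

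The content here is light, so there is no genuine obstacle; the single point I would state carefully rather than gloss is the elementary linear-algebra claim that adding two positive semidefinite matrices with orthogonal ranges concatenates their eigenvalue multisets (equivalently, that the displayed sum of rank-one terms really is an SVD, because the two hypotheses guarantee orthonormality of the assembled families of left and of right singular vectors). Everything else is bookkeeping about counting, summing, and maximizing singular values.
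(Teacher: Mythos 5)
The paper states Fact~\ref{fact:ortho} without proof (``The following facts are easily seen''), so there is no argument of the authors' to compare against; your proof is a correct, self-contained justification. Both of your arguments work, and the second one---assembling an SVD of $A+B$ by concatenating reduced SVDs $A=\sum_i\sigma_i u_iv_i^\dag$ and $B=\sum_j\tau_j p_jq_j^\dag$ after checking that $A^\dag B=0$ and $AB^\dag=0$ force $\braket{u_i}{p_j}=0$ and $\braket{v_i}{q_j}=0$ for all $i,j$---is the cleaner route, since it deals only with the nonzero singular values and so sidesteps any bookkeeping about zero eigenvalues. In the first version, the phrase ``the eigenvalue multiset of $(A+B)(A+B)^\dag$ is the union of those of $AA^\dag$ and of $BB^\dag$'' is literally off (all three matrices are $m\times m$ and each carries $m$ eigenvalues with multiplicity, so the union has $2m$ entries), but what you clearly intend---and what the orthogonal-range decomposition actually gives---is that the \emph{nonzero} eigenvalues of the sum are precisely the nonzero eigenvalues of $AA^\dag$ together with those of $BB^\dag$, which is all that the rank, trace norm, and spectral norm care about. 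So the argument stands; just phrase that sentence in terms of nonzero spectra, or lean on the SVD formulation, where the issue does not arise.
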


Another norm we will use is the $\gamma_2$ norm, introduced to complexity theory 
in \cite{LMSS07}, and familiar in matrix analysis as the Schur product operator norm.  
The $\gamma_2$ norm can be viewed as a weighted version of the trace norm.
\begin{definition}
$$\gamma_2(A)=\max_{u,v: \|u\|=\|v\|=1} \trn{A \hp uv^\dag}.$$
\end{definition}
Here $A \hp B$ denotes the entrywise product of $A$ and $B$.  
It is clear from this definition that $\gamma_2(A) \ge \trn{A}/\sqrt{mn}$ for a 
$m$-by-$n$ matrix $A$.

For a norm $\Phi$, the dual norm $\Phi^*$ is defined as 
$\Phi^*(v)=\max_{u: \Phi(u) \le 1} |\braket{u}{v}|$.  For example, the 
$\ell_\infty$ norm is dual to the $\ell_1$ norm, and the 
spectral norm is dual to the trace norm.

The norm $\gamma_2^*$, dual to the $\gamma_2$ norm, looks as follows.
\begin{definition}
$$
\gamma_2^*(A)=\max_{\substack{u_i,v_j \\ \|u_i\|=\|v_j\|=1}} \sum_{i,j} A[i,j] \braket{u_i}{v_j}.
$$
\end{definition}

Another complexity measure we will make use of is discrepancy
\begin{definition}
Let $A$ be an $m$-by-$n$ sign matrix and let $P$ be a probability distribution on the entries 
of $A$.  The discrepancy of $A$ with respect to $P$, denoted $\disc_P(A)$, is defined as
$$
\disc_P(A)=\max_{\substack{x \in \{0,1\}^{m} \\ y \in \{0,1\}^{n}}} |x^\dag A\hp P y|.
$$
\end{definition}

We will write $\disc_U(A)$ for the special case where $P$ is the uniform distribution.
It is easy to see from this definition that
$
\disc_U(A) \le \frac{\sn{A}}{\sqrt{\size(A)}}.
$
Shaltiel \cite{Sha03} has shown the deeper result that this bound is in fact polynomially tight:
\begin{theorem}[Shaltiel]\label{thm:shaltiel}
Let $A$ be a sign matrix. Then 
$$\frac{1}{108} \left(\frac{\sn{A}}{\sqrt{\size(A)}}\right)^3 \le \disc_U(A).$$
\end{theorem}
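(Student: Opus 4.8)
Since $P$ is uniform, $A \hp P = \frac{1}{\size(A)}A$, so $\disc_U(A) = \frac{1}{\size(A)}\max_{x \in \{0,1\}^m,\, y \in \{0,1\}^n}|x^\dag A y|$; it therefore suffices to exhibit Boolean vectors $x,y$ (equivalently, a combinatorial rectangle) with $|x^\dag A y|$ of order $\sn{A}^3/\sqrt{\size(A)}$. The plan is to start from the top singular vectors of $A$ and round them to Boolean vectors, with a rescaling step in between that is the heart of the argument. Let $u \in \R^m$ and $v \in \R^n$ be unit vectors with $Av = \sn{A}\,u$, so that $u^\dag A v = \sn{A}$.

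The key observation is that the top singular vector of a sign matrix cannot be too concentrated. Each row of $A$ has Euclidean norm $\sqrt n$, so Cauchy--Schwarz gives $\sn{A}\,\|u\|_\infty = \|Av\|_\infty \le \sqrt n$, and symmetrically $\sn{A}\,\|v\|_\infty \le \sqrt m$. Hence the rescaled vectors $u' = (\sn{A}/\sqrt n)\,u$ and $v' = (\sn{A}/\sqrt m)\,v$ satisfy $\|u'\|_\infty \le 1$ and $\|v'\|_\infty \le 1$, while $(u')^\dag A v' = \sn{A}^3/\sqrt{mn}$. It is precisely this rescaling that amplifies $\sn{A}$ into $\sn{A}^3/\sqrt{mn}$; omitting it reproduces only the trivial bound $\disc_U(A) \ge \sn{A}/\size(A)$.

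Now round $u'$ and $v'$ by binary expansion: write $u' = \sum_{k \ge 1} 2^{-k} x^{(k)}$ and $v' = \sum_{l \ge 1} 2^{-l} y^{(l)}$ with $x^{(k)} \in \{-1,0,1\}^m$ and $y^{(l)} \in \{-1,0,1\}^n$, and truncate both expansions at a level $K$. Since $\fro{A} = \sqrt{mn}$ and $\rk(A) \le \min\{m,n\}$ we have $\sn{A} \ge \sqrt{\max\{m,n\}}$, which makes the choice $K = O(\log \size(A))$ finite and bounds the truncation tails $u'',v''$ by $2^{-K}\sqrt{\size(A)}$ in Euclidean norm. Expanding $(u')^\dag A v'$ into the truncated bilinear form $\sum_{k,l \le K} 2^{-k-l}(x^{(k)})^\dag A y^{(l)}$ plus three cross terms, each cross term is bounded in terms of $\sn{A}$, the norms of $u',v'$, and the tail norms, so for a suitable $K = O(\log\size(A))$ the total error is at most $\tfrac12\sn{A}^3/\sqrt{mn}$. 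Consequently $\big|\sum_{k,l\le K} 2^{-k-l}(x^{(k)})^\dag A y^{(l)}\big| \ge \tfrac12\sn{A}^3/\sqrt{mn}$; as the weights $2^{-k-l}$ sum to at most $1$, some pair $(k^*,l^*)$ satisfies $|(x^{(k^*)})^\dag A y^{(l^*)}| \ge \tfrac12\sn{A}^3/\sqrt{mn}$. Writing each of $x^{(k^*)},y^{(l^*)}$ as a difference of two disjoint Boolean vectors and using the triangle inequality (a factor $4$) produces $x,y \in \{0,1\}$ with $|x^\dag A y| \ge \sn{A}^3/(8\sqrt{mn})$, so $\disc_U(A) \ge \sn{A}^3/(8\,\size(A)^{3/2}) = \tfrac18\big(\sn{A}/\sqrt{\size(A)}\big)^3$, which is even stronger than the stated bound (the exact constant is a matter of bookkeeping).

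I expect the main obstacle to be the non-concentration lemma $\sn{A}\,\|u\|_\infty \le \sqrt n$: this is the one place the sign-matrix hypothesis is used, and it is what produces the cubic exponent, since a rounding argument that ignores it yields only a linear dependence on $\sn{A}$. The rest is routine control of the rounding error, the delicate points being that $K$ is finite (guaranteed by $\sn{A} \ge \sqrt{\max\{m,n\}}$) and that the constants from truncation and from the $\{-1,0,1\}$-to-$\{0,1\}$ conversion are tracked. A shorter but less self-contained route is to note $\gamma_2^*(A) \ge \braket{A}{uv^\dag}/\gamma_2(uv^\dag) = \sn{A}/(\|u\|_\infty\|v\|_\infty) \ge \sn{A}^3/\sqrt{mn}$, then apply Grothendieck's inequality to pass from $\gamma_2^*(A)$ to $\max_{x,y\in\{-1,1\}}|x^\dag A y|$, and conclude with the elementary comparison of that quantity with $\disc_U(A)$; this also yields the theorem with an absolute constant.
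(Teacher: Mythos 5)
This theorem is quoted from Shaltiel's paper and is not proved in the present paper, so there is no internal proof to compare against; I will evaluate your proposal on its own terms.

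Your argument is correct, and the key idea you identify is indeed the heart of Shaltiel's argument: the non-concentration bound $\sn{A}\,\|u\|_\infty \le \sqrt{n}$ (and symmetrically $\sn{A}\,\|v\|_\infty \le \sqrt{m}$) for the top singular pair $Av=\sn{A}u$, $A^\dag u = \sn{A} v$, which follows from Cauchy--Schwarz against the rows (resp.\ columns) of a sign matrix. This is exactly where the sign-matrix hypothesis enters and exactly what manufactures the cubic exponent after rescaling $u'=(\sn{A}/\sqrt n)u$, $v'=(\sn{A}/\sqrt m)v$ so that $\|u'\|_\infty,\|v'\|_\infty\le 1$ while $(u')^\dag A v'=\sn{A}^3/\sqrt{mn}$. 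From there the binary-expansion rounding, the averaging step (the weights $2^{-k-l}$ sum to at most $1$), and the factor-$4$ conversion from $\{-1,0,1\}$ to disjoint $\{0,1\}$ rectangles are all routine and sound. Working through the constants as you sketch them gives roughly a $\tfrac18$ constant rather than $\tfrac1{108}$; Shaltiel's published constant is not tight, so getting a stronger one is fine. Your alternative route is also correct and more in the spirit of this paper: $\gamma_2^*(A) \ge \braket{A}{uv^\dag}/\gamma_2(uv^\dag) = \sn{A}/(\|u\|_\infty\|v\|_\infty) \ge \sn{A}^3/\sqrt{mn}$, using $\gamma_2(uv^\dag)=\|u\|_\infty\|v\|_\infty$, and then the paper's Linial--Shraibman theorem $\gamma_2^*(A\hp P)\le K_G\,\disc_P(A)$ applied with $P$ uniform gives $\disc_U(A)\ge \gamma_2^*(A)/(K_G\,\size(A)) \ge K_G^{-1}\bigl(\sn{A}/\sqrt{\size(A)}\bigr)^3$, an even better absolute constant.

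Two small points worth tightening. First, the tail norm should read $\|u''\|_2 \le 2^{-K}\sqrt m$ and $\|v''\|_2\le 2^{-K}\sqrt n$, not $2^{-K}\sqrt{\size(A)}$; the subsequent error estimate $3\cdot 2^{-K}\sqrt{mn}\,\sn{A}$ is then the right thing. Second, the lower bound $\sn{A}\ge\sqrt{\max\{m,n\}}$ (from $\fro{A}^2 = mn$ and $\rk(A)\le\min\{m,n\}$) is not logically needed for the rounding step --- any finite $K$ with $2^{-K}\le \sn{A}^2/(6mn)$ works, and $\sn{A}>0$ for a sign matrix --- it only serves to make $K=O(\log\size(A))$, which is cosmetic here.
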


Discrepancy and the $\gamma_2^*$ norm are very closely related.  Linial and Shraibman 
\cite{LS09b} observed that Grothendieck's inequality gives the following.
\begin{theorem}[Linial-Shraibman]
For any sign matrix $A$ and probability distribution $P$
$$
\disc_P(A) \le \gamma_2^*(A \hp P) \le K_G \; \disc_P(A)
$$
where $1.67\ldots \le K_G \le 1.78\ldots$ is Grothendieck's constant.
\end{theorem}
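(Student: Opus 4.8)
The plan is to derive both inequalities from a single nontrivial input---Grothendieck's inequality---with everything else amounting to a routine unwinding of the definitions of $\gamma_2^*$ and $\disc$. Recall that Grothendieck's inequality guarantees a universal constant $K_G$, with $1.67\ldots\le K_G\le 1.78\ldots$, such that for every real matrix $M=[M_{ij}]$ and all collections of real unit vectors $\{u_i\}$, $\{v_j\}$ in a common Hilbert space,
$$\Big|\sum_{i,j}M_{ij}\,\braket{u_i}{v_j}\Big|\;\le\;K_G\;\max_{\epsilon_i,\delta_j\in\{-1,+1\}}\Big|\sum_{i,j}M_{ij}\,\epsilon_i\delta_j\Big|.$$
I would apply this with $M=A\hp P$. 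The left-hand side, maximized over all admissible unit vectors, is by definition exactly $\gamma_2^*(A\hp P)$; the quantity maximized on the right is $|\epsilon^\dag(A\hp P)\delta|$, a discrepancy taken over $\pm1$ combination vectors, which agrees with $\disc_P(A)$ up to an absolute constant (see the last paragraph). So the upper bound $\gamma_2^*(A\hp P)\le K_G\,\disc_P(A)$ is, modulo that reconciliation, a direct restatement of Grothendieck's inequality.

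For the lower bound $\disc_P(A)\le\gamma_2^*(A\hp P)$ I would argue straight from the definition of $\gamma_2^*$. Restricting the maximization there to one-dimensional unit vectors, \ie to scalars $u_i=\epsilon_i\in\{-1,+1\}$ and $v_j=\delta_j\in\{-1,+1\}$, can only decrease the maximum; in that case $\braket{u_i}{v_j}=\epsilon_i\delta_j$, so the objective becomes $\sum_{i,j}(A\hp P)_{ij}\,\epsilon_i\delta_j$. Negating all the $\epsilon_i$ flips the sign of this sum, so its maximum over such rank-one choices equals $\max_{\epsilon,\delta}|\epsilon^\dag(A\hp P)\delta|$, and an elementary expansion (below) shows this is at least $\disc_P(A)$. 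Chaining the two estimates gives $\disc_P(A)\le\max_{\epsilon,\delta}|\epsilon^\dag(A\hp P)\delta|\le\gamma_2^*(A\hp P)$.

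The only genuine obstacle is Grothendieck's inequality itself, which I would invoke as a black box rather than reprove: its classical proofs (Gaussian rounding $x\mapsto\mathrm{sign}(\braket{x}{\theta})$, or Krivine's sharper argument) are standard but not short, and they are precisely where the constant $K_G$ originates. Everything else is bookkeeping---passing between the $\pm1$-valued combination vectors that Grothendieck's inequality naturally controls and the $\{0,1\}$-valued vectors appearing in the definition of $\disc_P$. This is handled by the affine identities $\epsilon_i\delta_j=4x_iy_j-2x_i-2y_j+1$ and $x_iy_j=\tfrac{1}{4}(1+\epsilon_i)(1+\delta_j)$, each of which writes one type of bilinear form as a bounded linear combination of the other, so that the two notions of discrepancy coincide up to an absolute constant---and coincide exactly, with constant $K_G$ as stated, if one works with $\pm1$ combination vectors throughout.
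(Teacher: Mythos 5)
Your overall approach---derive both inequalities from Grothendieck's inequality via the $\pm1$ bilinear form $\max_{\epsilon,\delta\in\{\pm1\}}|\epsilon^\dag(A\hp P)\delta|$---is exactly the route the paper intends (the paper cites the statement to Linial--Shraibman without proof, and they indeed obtain it from Grothendieck). The lower bound $\disc_P(A)\le\gamma_2^*(A\hp P)$ is handled correctly: restricting $\gamma_2^*$ to scalar $\pm1$ vectors and then expanding $x=\tfrac12(\mathbf{1}+\epsilon)$, $y=\tfrac12(\mathbf{1}+\delta)$ shows $|x^\dag B y|\le\max_{\epsilon,\delta}|\epsilon^\dag B\delta|$, as you indicate.

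The place where you are too casual is the ``bookkeeping'' reconciliation on the upper-bound side. Grothendieck gives $\gamma_2^*(A\hp P)\le K_G\max_{\epsilon,\delta\in\{\pm1\}}|\epsilon^\dag(A\hp P)\delta|$, and to land on $\disc_P$ as defined in the paper (with $\{0,1\}$ combination vectors) you then need $\max_{\epsilon,\delta\in\{\pm1\}}|\epsilon^\dag B\delta|\le\disc_P(A)$. That inequality goes the wrong way: writing $\epsilon$ via $S=\{i:\epsilon_i=1\}$ and $\delta$ via $T$, one gets $\epsilon^\dag B\delta=B(S,T)-B(S,\bar T)-B(\bar S,T)+B(\bar S,\bar T)$, so the bound is $\max_{\epsilon,\delta}|\epsilon^\dag B\delta|\le 4\,\disc_P(A)$, and the resulting statement is $\gamma_2^*(A\hp P)\le 4K_G\,\disc_P(A)$. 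The factor $4$ is not removable with the $\{0,1\}$ definition: take $A=\bigl(\begin{smallmatrix}1&-1\\-1&1\end{smallmatrix}\bigr)$ and $P$ uniform, so $A\hp P=\tfrac14 A$; then $\disc_P(A)=1/4$ while $\gamma_2^*(A\hp P)=\tfrac14\max\|u_1-u_2\|\,\|v_1-v_2\|=1$, and $1>K_G\cdot\tfrac14\approx 0.45$. You flag this tension in your last paragraph (``coincide exactly\ldots if one works with $\pm1$ combination vectors throughout''), which is the right instinct, but you should make it explicit that with the definition of $\disc_P$ given in this paper the correct constant in the upper bound is $4K_G$, and that the constant $K_G$ is exact only when discrepancy is defined over $\pm1$ (equivalently $[-1,1]$) combination vectors, as in the Linial--Shraibman original.
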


\subsubsection*{Approximate measures}
We will also use approximate versions of these complexity measures
which come in handy when working with bounded-error 
models.  Say that a function $g$ gives an
$\epsilon$-approximation to $f$ if $|f(x) - g(x)| \le \epsilon$
for all $x \in \{-1,+1\}^n$.  The $\epsilon$-approximate
polynomial degree of $f$, denoted $\deg_\epsilon(f)$, is the
minimum degree of a function $g$ which gives an
$\epsilon$-approximation to $f$.  

We will similarly look at the $\epsilon$-approximate version 
of the trace and $\gamma_2$ norms.  We give the general definition with respect to 
any norm.
\begin{definition}[approximation norm]
Let $\Phi: \R^n \rightarrow \R$ be an arbitrary norm.  Let 
$v \in \R^n$ be a sign vector.  For $0 \le \epsilon < 1$ we define
the approximation norm $\Phi^\epsilon$ as 
$$
\Phi^\epsilon(v)=\min_{\substack{u \\ \|v-u\|_\infty \le \epsilon}} \Phi(u).
$$
\end{definition}
Notice that an approximation norm $\Phi^\epsilon$ is not itself a norm---
we have only defined it for sign vectors, and it will in general not satisfy 
the triangle inequality.

As a norm is a convex function, using the separating hyperplane theorem
one can quite generally give the following equivalent dual formulation
of an approximation norm.
\begin{proposition}
\label{prop:dual_approximate_norm} Let $v \in \R^n$ be a sign vector, and
$0 \le \epsilon < 1$
$$
\Phi^\epsilon(v) = \max_u
\frac{|\braket{v}{u}|-\epsilon\|u\|_1}{\Phi^*(u)}
$$
\end{proposition}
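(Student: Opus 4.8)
The plan is to read this as a convex-duality statement: $\Phi^\epsilon(v)$ is the optimal value of the convex program $\min\{\Phi(u) : \|v-u\|_\infty \le \epsilon\}$, and the right-hand side is its dual. I would establish the two inequalities separately. Weak duality ($\Phi^\epsilon(v) \ge$ RHS) is a short direct computation; strong duality ($\Phi^\epsilon(v) \le$ RHS) is where the separating hyperplane theorem is needed.

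For weak duality, by homogeneity of the ratio it suffices to fix $u$ with $\Phi^*(u)=1$ and show $\Phi^\epsilon(v) \ge |\braket{v}{u}| - \epsilon\|u\|_1$. For any $w$ with $\|v-w\|_\infty \le \epsilon$, the definition of the dual norm gives $\Phi(w) = \Phi(w)\Phi^*(u) \ge \braket{w}{u}$, while H\"older's inequality gives $\braket{w}{u} = \braket{v}{u} - \braket{v-w}{u} \ge \braket{v}{u} - \|v-w\|_\infty\|u\|_1 \ge \braket{v}{u} - \epsilon\|u\|_1$. Minimizing $\Phi(w)$ over feasible $w$ yields $\Phi^\epsilon(v) \ge \braket{v}{u} - \epsilon\|u\|_1$, and applying the same bound to $-u$ (which also has $\Phi^*(-u)=1$ and $\|-u\|_1=\|u\|_1$) produces the absolute value.

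For strong duality I would argue as follows. Since $v$ is a sign vector and $\epsilon<1$, the feasible set $C_2 = \{u : \|v-u\|_\infty \le \epsilon\}$ does not contain $0$, so the minimizer $u^\ast$ (which exists by compactness of $C_2$ and continuity of $\Phi$) satisfies $\delta := \Phi^\epsilon(v) = \Phi(u^\ast) > 0$. Then $C_1 = \{u : \Phi(u) < \delta\}$ is a nonempty open convex set disjoint from the nonempty convex set $C_2$, so by the separating hyperplane theorem there exist a nonzero $w$ and a scalar $c$ with $\sup_{u\in C_1}\braket{u}{w} \le c \le \inf_{u\in C_2}\braket{u}{w}$. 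Evaluating both extremes: rescaling inside $C_1$ and using that the $\Phi$-unit ball is symmetric gives $\sup_{u\in C_1}\braket{u}{w} = \delta\,\Phi^*(w)$, and writing $u = v-t$ with $\|t\|_\infty\le\epsilon$ together with $\ell_1$-$\ell_\infty$ duality gives $\inf_{u\in C_2}\braket{u}{w} = \braket{v}{w} - \epsilon\|w\|_1$. Hence $\delta\,\Phi^*(w) \le \braket{v}{w} - \epsilon\|w\|_1 \le |\braket{v}{w}| - \epsilon\|w\|_1$; since $w\neq 0$ forces $\Phi^*(w)>0$, dividing gives $\delta \le (|\braket{v}{w}| - \epsilon\|w\|_1)/\Phi^*(w)$, which is at most the RHS. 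Combining the two directions gives equality, and in fact shows the maximum is attained at this $w$.

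The main obstacle is the strong-duality step: one must ensure $C_1$ is nonempty, i.e. $\Phi^\epsilon(v) > 0$, which is exactly where the hypotheses "$v$ is a sign vector" and "$\epsilon<1$" are used, and one must check the separating functional is nonzero so the final division is legitimate. The remaining manipulations (passing between the closed ball $\Phi^*(u)\le 1$ and the sphere $\Phi^*(u)=1$, symmetrizing to get $|\braket{v}{u}|$, evaluating the support functions of the $\ell_\infty$-ball and the $\Phi$-ball) are routine. An alternative to the hyperplane argument would be Sion's minimax theorem applied to the bilinear form $\braket{u}{w}$ on the compact convex sets $\{w:\Phi^*(w)\le 1\}$ and $C_2$, but the separation argument stays closest to the statement in the proposition.
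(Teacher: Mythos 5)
The paper does not actually prove this proposition; it notes that the argument uses the separating hyperplane theorem for convex sets and refers the reader to the survey \cite{LS09d} for details. Your proof is a correct, self-contained version of exactly that argument, and it matches the approach the paper points to.

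Both directions are sound. Weak duality via H\"older plus the definition of the dual norm, and the symmetrization with $-u$ to recover the absolute value, are fine. In the strong-duality step you correctly identify where the hypotheses enter: $v$ a sign vector and $\epsilon<1$ force $0\notin C_2$, hence $\delta=\Phi^\epsilon(v)>0$ so that $C_1=\{u:\Phi(u)<\delta\}$ is nonempty and open; the separating functional $w$ is nonzero, and since $\Phi^*$ is a norm on a finite-dimensional space, $\Phi^*(w)>0$, legitimizing the division. The support-function evaluations $\sup_{C_1}\braket{u}{w}=\delta\,\Phi^*(w)$ (using symmetry of the $\Phi$-ball and that sup over the open ball equals sup over the closed ball) and $\inf_{C_2}\braket{u}{w}=\braket{v}{w}-\epsilon\|w\|_1$ are correct. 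One small point worth making explicit if you write this up: the maximum on the right-hand side is over $u\neq 0$, since the ratio is undefined at $u=0$; your argument produces a concrete nonzero maximizer $w$, so the max is attained.
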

A proof of this can be found in the survey \cite{LS09d}.

\subsection{Communication complexity}
Let $X,Y,S$ be finite sets and $f: X \times Y \rightarrow S$ be a function.  
We will let $D(f)$ be the deterministic communication complexity of $f$, 
and $R_\epsilon(f)$ denote the randomized public coin complexity of 
$f$ with error probability at most $\epsilon$.  We refer to the reader to 
\cite{KN97} for a formal definition of these models.  We will also study
$Q_\epsilon(f)$ and $ Q_\epsilon^*(f)$, the $\epsilon$-error quantum communication 
complexity of $f$ without and with shared entanglement, respectively.  
We refer the reader to \cite{Raz03} for a nice description of these models.

For notational convenience, we will identify a function 
$f: X \times Y \rightarrow \{-1,+1\}$ with its sign matrix $M_f = [f(x,y)]_{x,y}$.  Thus, 
for example, $\sn{f}$ refers to the spectral norm of the sign matrix representation
of $f$.  

For all of our lower bound results we will actually lower bound the approximate 
trace norm or $\gamma_2$ norm of the function.  Razborov showed that the approximate 
trace norm can be used to lower bound on quantum communication complexity, and 
Linial and Shraibman generalized this to the $\gamma_2$ norm. 
\begin{theorem}[Linial-Shraibman \cite{LS09c}] \label{thm: approx trace}
Let $A$ be a sign matrix and $0 \le \epsilon < 1/2$.  Then
$$
Q_\epsilon^*(A) \ge \log\left(\gamma_2^{2\epsilon}(A) \right) - 2.
$$
\end{theorem}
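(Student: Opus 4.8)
This is the theorem of Linial and Shraibman \cite{LS09c}, and the plan is to reproduce the standard argument, a $\gamma_2$-refinement of Razborov's approximate-trace-norm method. There are three pieces: pass from a protocol to its acceptance matrix; bound the $\gamma_2$ norm of that matrix via the Yao--Kremer decomposition of quantum protocols; and assemble the bound using the triangle inequality for $\gamma_2$ together with the definition of the approximation norm $\gamma_2^{2\epsilon}$.

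First I would set $c = Q_\epsilon^*(A)$ and fix an entanglement-assisted $\epsilon$-error quantum protocol for $A$ that communicates $c$ qubits. Let $P$ be its acceptance matrix, where $P[x,y]$ is the probability the protocol outputs $+1$ on input $(x,y)$. Correctness gives $P[x,y] \ge 1-\epsilon$ whenever $A[x,y]=+1$ and $P[x,y]\le\epsilon$ whenever $A[x,y]=-1$, so the matrix $\tilde A := 2P - J$ satisfies $\|\tilde A - A\|_\infty \le 2\epsilon$. Thus $\tilde A$ is an admissible choice in the definition of $\gamma_2^{2\epsilon}(A)$, and since $\gamma_2$ is a norm with $\gamma_2(J)=1$ we get
\[
\gamma_2^{2\epsilon}(A) \;\le\; \gamma_2(\tilde A) \;=\; \gamma_2(2P-J) \;\le\; 2\gamma_2(P) + \gamma_2(J) \;=\; 2\gamma_2(P) + 1 .
\]

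The substantive step is the bound $\gamma_2(P) \le 2^c$. Here I would invoke the Yao--Kremer structural description: recording the classical transcript in an auxiliary register, the protocol's final state on inputs $(x,y)$ has the form $\sum_{t\in\{0,1\}^c}|t\rangle|\mu_t(x)\rangle|\nu_t(y)\rangle$, and because distinct transcripts are orthogonal, the acceptance matrix splits as a sum $P = \sum_{t} p_t q_t^\dagger$ of $2^c$ nonnegative rank-one matrices with $p_t[x] = \||\mu_t(x)\rangle\|^2$ and $q_t[y]=\langle\nu_t(y)|\,\Pi_t^{\mathrm{acc}}\,|\nu_t(y)\rangle$. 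Combining this with the factorization characterization of $\gamma_2$ (the minimum over $P=UV$ of the largest row norm of $U$ times the largest column norm of $V$; see \cite{LS09d}), the balanced normalization of the Kremer decomposition, and the standard reduction that folds a shared entangled state into the players' local registers, one obtains $\gamma_2(P)\le 2^c$. Feeding this into the displayed inequality, $\gamma_2^{2\epsilon}(A)\le 2\cdot2^c+1 \le 2^{c+2}$, hence $c \ge \log \gamma_2^{2\epsilon}(A) - 2$, which is the claim.

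I expect the main obstacle to be the constant in the structural lemma. Bounding each player's contribution to the decomposition separately loses a factor and gives only $\gamma_2(P)\le 2^{2c}$, which would weaken the final bound by a factor of two in the exponent. Obtaining $\gamma_2(P)\le 2^c$ --- and hence exactly the ``$-2$'' in the statement --- requires using that the $x$-dependent and $y$-dependent vectors in the decomposition are jointly normalized rather than individually, and handling prior shared entanglement without increasing the effective dimension. The remaining steps are routine manipulation with the triangle inequality and the definition of $\gamma_2^{2\epsilon}$.
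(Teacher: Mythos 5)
The paper does not prove this statement: it is quoted verbatim from Linial--Shraibman \cite{LS09c} and used as a black box, so there is no in-paper argument to compare against. I will therefore comment only on the internal correctness of your sketch.

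Your overall plan is the standard one and the bookkeeping is right: $\tilde A = 2P - J$ is $2\epsilon$-close to $A$ in $\ell_\infty$, so $\gamma_2^{2\epsilon}(A) \le 2\gamma_2(P) + 1$, and if $\gamma_2(P)\le 2^c$ then $\gamma_2^{2\epsilon}(A)\le 2^{c+2}$, giving the stated bound. The weak point is the justification of $\gamma_2(P)\le 2^c$. You assert that one can ``record the classical transcript in an auxiliary register'' so that distinct transcripts become orthogonal and $P$ splits as $\sum_t p_t q_t^\dagger$ over $2^c$ nonnegative rank-one terms. For a quantum protocol this is not available: the messages are qubits in superposition, and copying them into an auxiliary register decoheres the protocol and changes $P$. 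Indeed, in the Kremer/Yao decomposition $|\psi_{xy}\rangle=\sum_t |\mu_t(x)\rangle|\nu_t(y)\rangle$ the sector vectors $|\mu_t(x)\rangle$ for distinct $t$ are generally \emph{not} orthogonal (already in a one-qubit protocol they are parallel), so the acceptance probability unavoidably contains cross terms $\langle\mu_t(x)|\mu_{t'}(x)\rangle\langle\nu_t(y)|\Pi|\nu_{t'}(y)\rangle$. The correct route is to factor $P[x,y]=\langle u_x,v_y\rangle$ with $u_x,v_y$ indexed by ordered pairs $(t,t')$, of dimension $2^{2c}$, and then use the asymmetric normalization in Kremer's lemma (Alice's sectors satisfy $\sum_t\|\mu_t(x)\|^2=1$, Bob's sectors are individually unit vectors, so $\sum_t\|\nu_t(y)\|^2=2^c$); Cauchy--Schwarz then gives $\|u_x\|\le 1$ and $\|v_y\|\le 2^c$, hence $\gamma_2(P)\le 2^c$. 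This also passes cleanly through shared entanglement since the $u_x,v_y$ live in a fixed $2^{2c}$-dimensional space regardless of the ancilla dimension. So your conclusion and constant are correct, but the stated mechanism (sum of $2^c$ nonnegative rank-one terms) is not; replacing it with the $(t,t')$-indexed factorization and the asymmetric Kremer normalization fixes the argument.
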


\subsubsection*{Composed functions} \vspace{-1em}
Before discussing lower bounds on a block composed function $f \comp g^n$, 
let us see what we expect the complexity of such a function to be. A
fundamental idea going back to Nisan \cite{Nis94} and Buhrman, Cleve, and Wigderson
\cite{BCW98}, is that the complexity of $f \comp g^n$ can be related
to the {\em decision tree} complexity, also known as {\em query} complexity, of $f$ and the 
communication
complexity of $g$. Let $\DT(f)$ be the query complexity of
$f$, that is the number of queries of the form $x_i=?$ needed to
evaluate $f(x)$ in the worst case.  Similarly, let $\RT_{\epsilon}(f),
\QT_{\epsilon}(f)$ denote the randomized and quantum query 
complexity of $f$ respectively, with error probability at most
$\epsilon$.  For formal definitions of these measures and a survey of
query complexity we recommend Buhrman and de Wolf \cite{BW02}.
\begin{theorem}[Nisan \cite{Nis94}, Buhrman-Cleve-Wigderson \cite{BCW98}]\label{thm:composed_upper}
For any two Boolean functions $f: \{-1,+1\}^n \rightarrow \{-1,+1\}$ and
$g: X \times Y \rightarrow \{-1,+1\}$,
\begin{align*}
D(f \comp g^n)&=O(\DT(f) D(g)) \\
R_{1/4}(f \comp g^n)&=O(\RT_{1/4}(f) R_{1/4}(g) \log \RT_{1/4}(f)) \\
Q_{1/4}(f \comp g^n)&=O(\QT_{1/4}(f) Q_{1/4}(g) \log n).
\end{align*}
\end{theorem}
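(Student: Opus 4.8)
The plan is to build a communication protocol for $f\comp g^n$ by simulating an optimal query algorithm for $f$, where each query to the $i$-th bit $g(x^i,y^i)$ is answered by running an optimal communication protocol for $g$ on the pair $(x^i,y^i)$. First I would handle the deterministic case. Fix a decision tree $T$ for $f$ of depth $\DT(f)$ and an optimal deterministic protocol $\Pi_g$ for $g$ of cost $D(g)$. On input $(x,y)=((x^1,\ldots,x^n),(y^1,\ldots,y^n))$, Alice and Bob walk down $T$ together: whenever the tree asks for the value of the $i$-th input bit, which here is $g(x^i,y^i)$, they run $\Pi_g$ on $(x^i,y^i)$, exchange the at most $D(g)$ bits, both learn $g(x^i,y^i)$, and proceed to the appropriate child. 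After at most $\DT(f)$ such rounds they reach a leaf labelled by the value $f(g(x^1,y^1),\ldots,g(x^n,y^n))$, which is exactly $(f\comp g^n)(x,y)$. The total communication is at most $\DT(f)\cdot D(g)$, giving $D(f\comp g^n)=O(\DT(f)D(g))$.

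Next I would adapt this to the randomized case, where the extra $\log\RT_{1/4}(f)$ factor and the need for error reduction enter. The naive simulation makes at most $\RT_{1/4}(f)$ calls to a randomized protocol $\Pi_g$ for $g$, but if each call errs with constant probability the errors accumulate over the $\RT_{1/4}(f)$ queries. The fix is standard amplification: before plugging $\Pi_g$ into the simulation, boost its success probability to $1-\delta$ with $\delta=\Theta(1/\RT_{1/4}(f))$ by running it $O(\log(1/\delta))=O(\log\RT_{1/4}(f))$ times with independent public randomness and taking the majority answer; this costs $O(R_{1/4}(g)\log\RT_{1/4}(f))$ bits per simulated query. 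By a union bound over the at most $\RT_{1/4}(f)$ queries made along the (randomized) computation path, all query answers are simultaneously correct except with probability $O(1)$ times $\delta\cdot\RT_{1/4}(f)$, which can be made a small constant; conditioned on that event the simulated query algorithm for $f$ itself errs with probability at most $1/4$, so the overall protocol has constant error, and can be driven below $1/4$ by a final constant-factor amplification. The cost is $O(\RT_{1/4}(f)\cdot R_{1/4}(g)\cdot\log\RT_{1/4}(f))$.

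Finally the quantum case: here one wants to simulate a quantum query algorithm for $f$ making $\QT_{1/4}(f)$ queries, replacing each query to the oracle bit $g(x^i,y^i)$ by a coherent run of a quantum communication protocol for $g$. The subtlety, and the step I expect to be the main obstacle, is that a quantum query algorithm queries all $n$ input bits \emph{in superposition}, so one cannot simply answer a single classical query; instead one must implement the unitary $|i\rangle|b\rangle\mapsto|i\rangle|b\oplus g(x^i,y^i)\rangle$ as a quantum communication operation, run it coherently (and reversibly, uncomputing any garbage and intermediate messages so as not to decohere the query register), and — because $i$ ranges over $[n]$ — pay an extra $O(\log n)$ overhead to route/address the $i$-th block, which accounts for the $\log n$ factor. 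One then needs to argue that $\QT_{1/4}(f)$ coherent invocations of an $\epsilon$-error protocol for $g$, with $\epsilon$ a sufficiently small constant, still yield a protocol for $f\comp g^n$ with error at most $1/4$; this follows from the standard hybrid argument bounding the trace-distance accumulated over $\QT_{1/4}(f)$ imperfect oracle calls, together with amplification of $\Pi_g$ to error $\epsilon=\Theta(1/\QT_{1/4}(f))$ if needed. Putting the pieces together gives $Q_{1/4}(f\comp g^n)=O(\QT_{1/4}(f)\,Q_{1/4}(g)\,\log n)$. For all three bounds I would cite the original analyses of Nisan \cite{Nis94} and Buhrman–Cleve–Wigderson \cite{BCW98} for the details of the coherent simulation and the error bookkeeping.
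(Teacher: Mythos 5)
The paper does not prove this theorem; it is cited verbatim from Nisan~\cite{Nis94} and Buhrman--Cleve--Wigderson~\cite{BCW98} as background, so there is no ``paper proof'' to compare against. Your sketch is the standard simulation argument used in those references, and it is essentially correct: in the deterministic case you walk the decision tree and answer each query with a $D(g)$-bit subprotocol; in the randomized case you amplify the inner protocol to error $\Theta(1/\RT_{1/4}(f))$, union-bound, and pick up the $\log\RT_{1/4}(f)$ factor; in the quantum case you coherently simulate each oracle call by a reversible run of the protocol for $g$.

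One point in the quantum case is phrased imprecisely. You attribute the multiplicative $\log n$ to the cost of routing the $\log n$-qubit index register. That routing cost is in fact \emph{additive}: each oracle simulation costs about $\log n + Q_{1/4}(g)$ qubits, giving $O(\QT_{1/4}(f)(\log n + Q_{1/4}(g)))$ in total, which by itself does not produce a $\QT_{1/4}(f)\,Q_{1/4}(g)\,\log n$ bound. The genuine source of the multiplicative $\log$ factor is the same amplification you already invoke ``if needed'': since $\QT_{1/4}(f)$ coherent oracle calls accumulate trace distance linearly, the protocol for $g$ must be amplified to error $\Theta(1/\QT_{1/4}(f))$, costing $O(\log \QT_{1/4}(f))=O(\log n)$ repetitions per call. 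Combining that multiplicative $\log n$ with the additive routing overhead and using $Q_{1/4}(g)\ge 1$ gives the stated $O(\QT_{1/4}(f)\,Q_{1/4}(g)\,\log n)$. So the amplification step should be presented as the main cause of the $\log n$ factor, with routing absorbed into it, rather than the other way around. With that adjustment your argument is the standard one and is correct.
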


One advantage of working with block composed functions in light of this upper bound is that
query complexity is in general better understood than communication complexity.  In particular,
a polynomial relationship between deterministic query complexity and degree,
and randomized and quantum query complexities and approximate degree is known.
\begin{theorem}[\cite{NS94,BBCMW01}]
Let $f: \01^n \rightarrow \{-1,+1\}$.  Then
\begin{align*}
\DT(f) =O(\deg(f)^4), \quad \DT(f) =O(\deg_{1/4}(f)^6)
\end{align*}
\end{theorem}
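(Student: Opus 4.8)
\noindent The plan is to establish both bounds through the two-step template of Nisan--Szegedy \cite{NS94} and Beals et al.\ \cite{BBCMW01}: first bound the \emph{block sensitivity} $\mathrm{bs}(f)$ by the (approximate) polynomial degree via the polynomial method, and then bound $\DT(f)$ by a fixed power of $\mathrm{bs}(f)$ through a purely combinatorial certificate argument.

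For the first step, suppose $\mathrm{bs}(f)=b$, witnessed by an input $x$ together with pairwise disjoint blocks $B_1,\dots,B_b$, flipping any one of which negates $f(x)$. Let $p$ be the real multilinear polynomial that represents $f$ (for the exact bound) or that $\epsilon$-approximates $f$ with $\epsilon=1/4$ (for the approximate bound), of degree $d=\deg(f)$, respectively $d=\deg_{1/4}(f)$. Restricting $p$ to the inputs obtained from $x$ by flipping an arbitrary subset of the blocks yields a multilinear polynomial $q(z_1,\dots,z_b)$ on $\01^b$ of degree at most $d$ with $|q(0^b)-f(x)|\le\epsilon$, $|q(e_k)+f(x)|\le\epsilon$ for every $k$, and $|q|\le 1+\epsilon$ everywhere. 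Symmetrizing $q$ over permutations of its $b$ coordinates (the Minsky--Papert symmetrization lemma) produces a \emph{univariate} polynomial $\tilde r$ of degree at most $d$ with $|\tilde r(i)|\le 1+\epsilon$ at the integers $i=0,1,\dots,b$ and $|\tilde r(0)-\tilde r(1)|\ge 2-2\epsilon$. A Markov--Bernstein / Ehlich--Zeller type inequality---a degree-$d$ polynomial bounded at the $b+1$ integer points $0,\dots,b$ but changing by a constant across a unit step must have $d=\Omega(\sqrt b)$---then forces $b=O(d^2)$, i.e.\ $\mathrm{bs}(f)=O(\deg(f)^2)$ and $\mathrm{bs}(f)=O(\deg_{1/4}(f)^2)$.

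For the second step I would invoke the classical combinatorial relations among decision-tree depth, certificate complexity $C(f)$, sensitivity $s(f)$ and block sensitivity: namely $C(f)\le s(f)\,\mathrm{bs}(f)\le\mathrm{bs}(f)^2$ (a maximal packing of minimal sensitive blocks at an input is a certificate, and each such block has size at most $s(f)$) and $\DT(f)=O\big(C(f)\,\mathrm{bs}(f)\big)$ via the greedy procedure that repeatedly queries all bits of a minimal certificate of the current restriction, each round lowering the block sensitivity by one. This gives $\DT(f)=O(\mathrm{bs}(f)^3)$; combined with the first step this yields $\DT(f)=O(\deg_{1/4}(f)^6)$, and a sharper accounting in the exact case (as in \cite{NS94}) tightens the bound to $\DT(f)=O(\deg(f)^4)$.

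The genuine obstacle is the first step, and within it the univariate approximation-theoretic inequality: the symmetrization and the certificate arguments are routine, whereas extracting the right quantitative Markov-type statement---one robust to the $(1+\epsilon)$ slack at the sample points yet still converting a constant-size ``jump'' into an $\Omega(\sqrt b)$ lower bound on the degree---is the crux, and is precisely where \cite{NS94,BBCMW01} do their work.
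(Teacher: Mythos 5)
The paper does not prove this theorem; it is a black-box citation to Nisan--Szegedy and Beals et al., so there is no ``paper proof'' to compare against. Judged on its own terms, your plan is the right one and the approximate bound $\DT(f)=O(\deg_{1/4}(f)^6)$ is correctly established: $\mathrm{bs}(f)=O(\deg_{1/4}(f)^2)$ by symmetrization plus Ehlich--Zeller/Markov (your description of the subtleties there is accurate), $C(f)\le s(f)\mathrm{bs}(f)\le\mathrm{bs}(f)^2$ by the disjoint-minimal-sensitive-blocks argument, and $\DT(f)=O(C(f)\mathrm{bs}(f))$ by the greedy certificate-querying procedure, multiplying out to the sixth power.

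The gap is in the exact-degree bound. Your explicit chain gives only $\DT(f)=O(\mathrm{bs}(f)^3)=O(\deg(f)^6)$, and the ``sharper accounting in the exact case (as in \cite{NS94})'' you invoke to get $\DT(f)=O(\deg(f)^4)$ is not a tightening of constants in that same chain --- it requires a structurally different argument that you have not described and that does not follow from the lemmas you set up. Concretely, $\mathrm{bs}(f)\le 2\deg(f)^2$ fed into $\DT(f)\le\mathrm{bs}(f)^3$ cannot do better than a sixth power; no amount of careful bookkeeping fixes that. The improvement to $O(\deg(f)^4)$ is the Nisan--Smolensky maxonomial argument: one shows that for every input $x$ and every monomial $M$ of maximal degree $d=\deg(f)$ with nonzero coefficient, the restriction of $f$ fixing $x$ outside $M$ is still non-constant (its coefficient on $M$ is unchanged, since only $M$ itself can contribute to that monomial under restriction), hence contains a sensitive block $S\subseteq M$ of $f$ at $x$. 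The decision tree that repeatedly queries all $d$ variables of a maxonomial of the current restriction therefore harvests pairwise disjoint sensitive blocks, one per round, giving $\DT(f)\le \deg(f)\cdot\mathrm{bs}(f)$ (or, in the slightly weaker form usually cited, $\DT(f)\le 2\deg(f)^2\mathrm{bs}(f)$), which with $\mathrm{bs}(f)=O(\deg(f)^2)$ yields the fourth-power bound. This is genuinely a new structural lemma about low-degree polynomials, not an accounting trick, and it is the piece missing from your write-up. Note also that this argument has no approximate-degree analogue (an approximating polynomial may have spurious high-degree monomials with tiny coefficients), which is why the exponent for the exact case is better than for the $1/4$-approximate case.
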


Using this result together with Theorem~\ref{thm:composed_upper} gives the following corollary:
\begin{corollary}\label{cor:deg}
\begin{align*}
D(f \comp g^n) &= O(\deg(f)^4 D(g)), \quad 
R_{1/4}(f \comp g^n) = O(\deg_{1/4}(f)^6 R_{1/4}(g) \log
\deg_{1/4}(f))
\end{align*}
\end{corollary}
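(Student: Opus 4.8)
The plan is to obtain Corollary~\ref{cor:deg} by directly composing the two preceding results: the generic upper bound of Theorem~\ref{thm:composed_upper} relating the communication complexity of $f \comp g^n$ to the query complexity of $f$, and the polynomial relationship between query complexity and (approximate) degree of \cite{NS94,BBCMW01}. No new idea is needed; it is purely a matter of substitution.

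First, for the deterministic statement, I would take the first line of Theorem~\ref{thm:composed_upper}, namely $D(f \comp g^n) = O(\DT(f)\, D(g))$, and substitute the Nisan--Szegedy bound $\DT(f) = O(\deg(f)^4)$. This immediately gives $D(f \comp g^n) = O(\deg(f)^4 D(g))$.

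For the randomized statement two small bookkeeping points arise. The randomized line of Theorem~\ref{thm:composed_upper} is phrased in terms of $\RT_{1/4}(f)$, whereas the degree theorem bounds $\DT(f)$; since a deterministic decision tree is a (trivial) randomized one, $\RT_{1/4}(f) \le \DT(f) = O(\deg_{1/4}(f)^6)$, so we may replace $\RT_{1/4}(f)$ by $O(\deg_{1/4}(f)^6)$ in the bound. We must also simplify the logarithmic factor: from $\RT_{1/4}(f) = O(\deg_{1/4}(f)^6)$ we get $\log \RT_{1/4}(f) = O(\log \deg_{1/4}(f))$, since the logarithm of a polynomial in $t$ is $\Theta(\log t)$ (we may assume $\deg_{1/4}(f) \ge 2$, as the corollary is trivial otherwise). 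Plugging both facts into $R_{1/4}(f \comp g^n) = O(\RT_{1/4}(f)\, R_{1/4}(g)\, \log \RT_{1/4}(f))$ yields the claimed bound $R_{1/4}(f \comp g^n) = O(\deg_{1/4}(f)^6 R_{1/4}(g) \log \deg_{1/4}(f))$.

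Since every step is a substitution into an already-established inequality, there is essentially no obstacle here; the only thing requiring a moment's care is routing the randomized bound through $\RT_{1/4}(f) \le \DT(f)$ and the corresponding absorption of the $\log$ factor into $\log \deg_{1/4}(f)$.
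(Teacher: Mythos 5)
Your proof is correct and matches the paper's (implicit) derivation exactly: the corollary is obtained by plugging the $\DT(f)=O(\deg(f)^4)$ and $\DT(f)=O(\deg_{1/4}(f)^6)$ bounds into the composition upper bounds of Theorem~\ref{thm:composed_upper}. The bookkeeping you spell out, namely $\RT_{1/4}(f)\le \DT(f)$ and $\log \RT_{1/4}(f)=O(\log \deg_{1/4}(f))$, is the right way to route the randomized case, and the paper leaves it tacit.
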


Our goal, then, in showing lower bounds on the complexity of a block composed function
$f \comp g^n$ is to get something at least in the ballpark of this upper bound.  Of course, this is not 
always possible --- the protocol given by Theorem~\ref{thm:composed_upper} is not always
optimal.  For example, when $f$ is the PARITY function on $n$ bits, and $g(x,y)=\oplus(x,y)$ this
protocol just gives an upper bound of $n$ bits, when the true complexity is constant. See recent 
results by Zhang \cite{Zha09} and Sherstov \cite{She10} for discussions on the tightness of the 
bounds in Theorem \ref{thm:composed_upper}.

\section{Rank of block composed functions}\label{sec: strongly balanced, rank}
We begin by analyzing the rank of a block composed function $f \comp g^n$ when the inner function
$g$ is strongly balanced.  This case will illustrate the use of the strongly balanced assumption, and
is simpler to understand than the bounded-error situation treated in the next section.

Let us first formally state the definition of strongly balanced.
\begin{definition}[strongly balanced]
Let $A$ be a sign matrix, and $J$ be the all ones matrix of the
same dimensions as $A$.  We say that $A$ is {\em balanced} if
$\Tr(A J^\dag)=0$.  We further say that $A$ is {\em strongly
balanced} if $A J^\dag=A^\dag J=0$.  In words, a sign matrix is strongly
balanced if the sum over each row is zero, and similarly the sum over 
each column is zero. We will say that a two-variable Boolean function is balanced or strongly
balanced if its sign matrix representation is.
\end{definition}

\begin{theorem}
\label{th:rank_compose} Let $f : \{-1,+1\}^n \rightarrow
\{-1,+1\}$ be an arbitrary function, and let $g$ be a strongly
balanced function. Then
$$
\rk(M_{f \comp g^n})=\sum_{\substack{T \subseteq [n],\ \hat f_T \ne 0}} \rk(M_g)^{|T|}.
$$
\end{theorem}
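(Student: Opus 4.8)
The plan is to expand $f$ in its Fourier basis and decompose the matrix $M_{f \comp g^n}$ into an orthogonal sum, then apply Fact~\ref{fact:ortho} to add ranks. First I would write
$$
f(z_1,\ldots,z_n) = \sum_{T \subseteq [n]} \hat f_T \prod_{i \in T} z_i,
$$
and substitute $z_i = g(x^i,y^i)$. This gives
$$
M_{f \comp g^n} = \sum_{T : \hat f_T \ne 0} \hat f_T \, B_T, \qquad \text{where } B_T = \bigotimes_{i=1}^n C_i^{(T)},
$$
with $C_i^{(T)} = M_g$ if $i \in T$ and $C_i^{(T)} = J$ if $i \notin T$ (here the tensor factors are indexed so that row $(x^1,\ldots,x^n)$ and column $(y^1,\ldots,y^n)$ pick out the product of the corresponding entries). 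So $B_T$ is a tensor product of copies of $M_g$ and copies of $J$.

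The key step is to show that the matrices $\{B_T : \hat f_T \ne 0\}$ are pairwise \emph{orthogonal} in the strong sense $B_S B_T^\dag = B_S^\dag B_T = 0$ for $S \ne T$. This is exactly where the strongly balanced hypothesis enters. Using the mixed-product property of the tensor product, $B_S B_T^\dag = \bigotimes_i C_i^{(S)} (C_i^{(T)})^\dag$. Since $S \ne T$, there is a coordinate $i$ in the symmetric difference, say $i \in S \setminus T$; in that coordinate the factor is $M_g J^\dag$, which is $0$ precisely because $g$ is strongly balanced ($M_g J^\dag = 0$). One zero tensor factor kills the whole product, so $B_S B_T^\dag = 0$; the argument for $B_S^\dag B_T = 0$ is symmetric, using $M_g^\dag J = 0$. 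Next I would observe that scaling by the nonzero scalars $\hat f_T$ does not affect orthogonality or rank, and that $\rk(B_T) = \rk(M_g)^{|T|}$ since $\rk(J) = 1$ and rank is multiplicative under tensor products. Then Fact~\ref{fact:ortho} — applied inductively over the terms of the sum — gives
$$
\rk(M_{f \comp g^n}) = \sum_{T : \hat f_T \ne 0} \rk(\hat f_T B_T) = \sum_{T : \hat f_T \ne 0} \rk(M_g)^{|T|},
$$
as claimed.

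The main obstacle I anticipate is purely bookkeeping rather than conceptual: one has to be careful that the substitution $z_i \mapsto g(x^i,y^i)$ really does turn $\prod_{i \in T} z_i$ into the tensor product $B_T$ under the correct identification of the row set $X^n$ and column set $Y^n$ with index tuples, i.e.\ that the entry of $B_T$ at row $(x^1,\ldots,x^n)$, column $(y^1,\ldots,y^n)$ equals $\prod_{i \in T} g(x^i,y^i)$. Once the indexing convention is fixed this is immediate, but it should be stated cleanly. A secondary point worth a sentence is that Fact~\ref{fact:ortho} is stated for two matrices; to sum over many terms one checks that a sum of matrices mutually orthogonal to each other is orthogonal to any further such matrix (since $\big(\sum_j B_j\big) B_T^\dag = \sum_j B_j B_T^\dag = 0$), so the induction goes through. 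Everything else — multiplicativity of rank under $\otimes$, the mixed-product identity — is standard.
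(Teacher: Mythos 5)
Your proposal is correct and takes essentially the same route as the paper: expand $f$ in the Fourier basis, write each character term as a tensor product of copies of $M_g$ and $J$, use the strongly balanced hypothesis (via the mixed-product property) to obtain pairwise orthogonality $B_S B_T^\dag = B_S^\dag B_T = 0$, and then apply Fact~\ref{fact:ortho} together with multiplicativity of rank under tensor products. Your explicit remarks about indexing and about extending Fact~\ref{fact:ortho} from two summands to many are sensible bookkeeping that the paper leaves implicit, but the argument is the same.
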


\begin{proof}
Let us write out the sign matrix for $\chi_T \comp g^n$ explicitly.  If we let $M_g^0=J$ be the all 
ones matrix and $M_g^1=M_g$, then we can nicely write the sign matrix representing
$\chi_T(g(x^1, y^1), \ldots, g(x^n,y^n))$ as
$$
M_{\chi_T\circ g^n}=\bigotimes_i M_g^{T[i]}
$$
where $T[i]=1$ if $i \in T$ and $0$ otherwise.

We see that the condition on $g$ implies $M_{\chi_T \comp g^n}
M_{\chi_S \comp g^n}^\dag=0$ if $S \ne T$. Indeed,
\begin{align*}
M_{\chi_T \comp g^n} M_{\chi_S \comp g^n}^\dag & =
\left(\bigotimes_i M_g^{T[i]}\right) \left( \bigotimes_i M_g^{S[i]}\right)^\dag \\
 & =\bigotimes_i \left(M_g^{T[i]} (M_g^{S[i]})^\dag \right) =0.
\end{align*}
This follows since, by the assumption $S \ne T$, there is some $i$ for which $S[i] \ne T[i]$ which 
means that this term is either $M_g J^\dag=0$ or $J M_g^\dag=0$ because $g$ is strongly balanced.
The other case follows similarly.

Now that we have established this property, we can use Fact~\ref{fact:ortho} to obtain
\begin{align*}
 \rk(M_{f \comp g^n}) & =  \rk\Big(\sum_{T \subseteq [n]} \hat f_T \chi_T(g(x^1, y^1), \ldots, g(x^n,y^n))\Big)  \\
 & = \sum_{\substack{T \subseteq [n] \\ \hat f_T \ne 0}} \rk(M_{\chi_T \comp g^n}) \\
 & =\sum_{\substack{T \subseteq [n] \\ \hat f_T \ne 0}} \rk(M_g)^{|T|}
\end{align*}
In the last step we used the fact that rank is multiplicative
under tensor product. 
\end{proof}

Theorem~\ref{th:rank_compose} has the following implication for the log rank conjecture of the composed function with the assumption of the same conjecture for the inner function.
\begin{corollary}
Let $X,Y$ be finite sets, $g: X \times Y \rightarrow \{-1,+1\}$
be a strongly balanced function, and $M_g[x,y]=g(x,y)$ be the
corresponding sign matrix.  Let $f : \{-1,+1\}^n \rightarrow
\{-1,+1\}$ be an arbitrary function. Assume that $\rk(M_g) \ge 2$
and further suppose that there is a constant $c$ such that $D(g)
\le (\log \rk(M_g))^c$. Then
$$
D(f \comp g^n)=O( \log \rk(f \comp g)^{4+c}).
$$
\end{corollary}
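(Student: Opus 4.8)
The plan is to combine the exact rank formula of Theorem~\ref{th:rank_compose} with the deterministic upper bound of Corollary~\ref{cor:deg}; the whole argument is a short chain of inequalities, and I do not expect any substantial obstacle.

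First I would use Theorem~\ref{th:rank_compose} to lower bound $\log\rk(M_{f\comp g^n})$ by $\deg(f)$. Since $g$ is strongly balanced, that theorem gives $\rk(M_{f\comp g^n})=\sum_{T:\,\hat f_T\ne 0}\rk(M_g)^{|T|}$; keeping only the term corresponding to a set $T$ of maximal size with $\hat f_T\ne 0$, and using that all terms are positive, we get $\rk(M_{f\comp g^n})\ge\rk(M_g)^{\deg(f)}$. Taking logarithms and using the hypothesis $\rk(M_g)\ge 2$, i.e.\ $\log\rk(M_g)\ge 1$, yields
$$\deg(f)\;\le\;\deg(f)\log\rk(M_g)\;\le\;\log\rk(M_{f\comp g^n}).$$
If $f$ is constant then $\deg(f)=0$, $f\comp g^n$ is a constant function, $D(f\comp g^n)=0$, and the claimed bound holds trivially; hence we may assume $\deg(f)\ge 1$, and then $\rk(M_{f\comp g^n})\ge\rk(M_g)^{\deg(f)}\ge\rk(M_g)$ also gives $\log\rk(M_g)\le\log\rk(M_{f\comp g^n})$.

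Second, Corollary~\ref{cor:deg} bounds $D(f\comp g^n)=O(\deg(f)^4 D(g))$. Substituting the assumed bound $D(g)\le(\log\rk(M_g))^c$ together with the two inequalities from the previous step,
$$D(f\comp g^n)\;=\;O\!\big(\deg(f)^4(\log\rk(M_g))^c\big)\;=\;O\!\big((\log\rk(M_{f\comp g^n}))^{4+c}\big),$$
which is exactly the desired conclusion. The only points that require care are the degenerate case $\deg(f)=0$ and making sure the hypothesis $\rk(M_g)\ge 2$ is invoked both to replace $\deg(f)\log\rk(M_g)$ by $\deg(f)$ and to control $\log\rk(M_g)$ by $\log\rk(M_{f\comp g^n})$; everything else is direct substitution, so the hard part is essentially already done in Theorem~\ref{th:rank_compose}.
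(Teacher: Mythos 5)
Your proof is correct and follows essentially the same route as the paper: invoke Corollary~\ref{cor:deg} for $D(f \comp g^n)=O(\deg(f)^4 D(g))$, then use Theorem~\ref{th:rank_compose} to get $\log\rk(M_{f\comp g^n})\ge\deg(f)\log\rk(M_g)$, and finish by substituting using $\rk(M_g)\ge 2$. You are somewhat more explicit than the paper in spelling out the two separate consequences of $\log\rk(M_g)\ge 1$ (namely $\deg(f)\le\log\rk(M_{f\comp g^n})$ and $\log\rk(M_g)\le\log\rk(M_{f\comp g^n})$) and in noting the degenerate case $\deg(f)=0$, but these are details the paper leaves implicit rather than a genuinely different argument.
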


\begin{proof}
By Corollary~\ref{cor:deg}, 
$D(f \comp g^n)=O(\deg(f)^4 D(g))=O(\deg(f)^4 (\log \rk(M_g))^c)$. Now, it follows from
Theorem~\ref{th:rank_compose} that $\log \rk(f \comp g) \ge
\deg(f)\log \rk(M_g)$ as by definition of degree there is some $T
\subseteq \01^n$ with $|T|=\deg(f)$ and $\hat f_T \ne 0$.
\end{proof}

In particular, this Corollary means that whenever $g$ is a strongly balanced function on a constant
number of bits and $\rk(M_g) >1$, then the log rank conjecture holds for $f \comp g^n$.  If $g$ is strongly
balanced and $\rk(M_g)=1$ then, up to permutation of rows and columns, which does not change 
the communication complexity, $M_g$ is a tensor product of the XOR function with an all ones matrix.
The log rank conjecture in the case
$(f \comp \oplus^n) (x,y)=f(x_1 \oplus y_1, \ldots, x_n \oplus y_n)$ remains an interesting open question.
Shi and Zhang \cite{SZ09} have recently resolved this question when $f$ is symmetric.

\section{A bound in terms of approximate degree}
\label{sec: strongly balanced, disc}

In this section, we will address the frameworks of Sherstov and Shi-Zhu.  We 
extend both of these frameworks to give more general conditions on the inner function $g$ 
which still imply that the approximate degree of $f$ is a lower bound on the quantum query 
complexity of the composed function $f \comp g^n$.  In outline, both of these frameworks 
follow the same plan.  By Theorem~\ref{thm: approx trace} it suffices to lower bound the 
approximate $\gamma_2$ norm (or even approximate trace norm) of $f \comp g^n$.  To do this, 
they use the  dual formulation given by Proposition~\ref{prop:dual_approximate_norm} 
and construct a witness matrix $B$ which has non-negligible correlation with the target function and small $\gamma_2^*$ (or spectral) norm.

A very nice way to construct this witness, used by both Sherstov and Shi-Zhu, is to use the 
{\em dual polynomial} of $f$.  This is a polynomial $v$ which certifies that the approximate 
polynomial degree of $f$ is at least a certain value.  
More precisely, duality theory of linear programming 
gives the following lemma.
\begin{lemma}[Sherstov 
\cite{She09b}, Shi-Zhu 
\cite{SZ09b}]
Let $f:\{-1,+1\}^n$ $ \rightarrow \{-1,+1\}$ and let $d=\deg_{\epsilon}(f)$.
Then there exists a function $v: \{-1,+1\}^n \rightarrow \R$ such that 
\begin{enumerate}
  \item $\braket{v}{\chi_T}=0$ for every character $\chi_T$ with $|T|< d$.
  \item $\|v\|_1=1$.
  \item $\braket{v}{f} \ge \epsilon$.
\end{enumerate}
\label{dual_poly}
\end{lemma}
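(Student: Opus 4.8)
The plan is to realize $v$ as an optimal solution of the linear-programming dual of the best degree-$(d-1)$ polynomial approximation problem for $f$, and then to rescale. First I would unpack the hypothesis $d=\deg_\epsilon(f)$: by the definition of approximate degree, no polynomial of degree at most $d-1$ gives an $\epsilon$-approximation to $f$, i.e.\ $E_{d-1}(f)\defeq \min_{\deg p\le d-1}\max_{x\in\{-1,+1\}^n}|f(x)-p(x)|>\epsilon$. Here the minimum is genuinely attained, since the polynomials of degree at most $d-1$ on $\{-1,+1\}^n$ form a finite-dimensional real vector space and the objective tends to infinity, so inapproximability at level $\epsilon$ really yields the strict inequality $E_{d-1}(f)>\epsilon$ rather than merely $\ge\epsilon$.

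Next I would write $E_{d-1}(f)$ as the value of a linear program: minimize the real scalar $\delta$ over $\delta$ and the real coefficients $\{c_T\}_{|T|\le d-1}$ of $p=\sum_{|T|\le d-1} c_T\chi_T$, subject to the two inequalities $f(x)-p(x)\le\delta$ and $p(x)-f(x)\le\delta$ for each $x\in\{-1,+1\}^n$. This program is feasible (take $p=0$, $\delta=1$) and bounded below, so strong LP duality applies and the dual attains the same value $E_{d-1}(f)$. Passing to the dual, the two inequality families at each point $x$ contribute nonnegative multipliers $a_x,b_x\ge 0$; the free variable $\delta$ imposes $\sum_x(a_x+b_x)=1$; the free coefficient $c_T$ imposes $\sum_x(a_x-b_x)\chi_T(x)=0$ for every $T$ with $|T|\le d-1$; and the dual objective is $\sum_x(a_x-b_x)f(x)$. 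Setting $v(x)=a_x-b_x$, strong duality hands me a function $v:\{-1,+1\}^n\to\R$ with $\braket{v}{\chi_T}=0$ for all $|T|<d$, with $\|v\|_1=\sum_x|a_x-b_x|\le\sum_x(a_x+b_x)=1$, and with $\braket{v}{f}=E_{d-1}(f)>\epsilon$. This already gives conditions~1 and~3 and the bound $\|v\|_1\le 1$.

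Finally, to get $\|v\|_1=1$ exactly, I would replace $v$ by $v/\|v\|_1$ in case $\|v\|_1<1$ (note $v\neq 0$ since $\braket{v}{f}>\epsilon\ge 0$); this preserves the orthogonality condition~1, makes $\|v\|_1=1$, and only scales up $\braket{v}{f}$, so condition~3 still holds. The only points needing care are bookkeeping: orienting the dual multipliers so the dual constraints read off precisely as conditions~1--3, and the strict-versus-nonstrict issue in condition~3 handled above via the attained minimum. There is no substantive obstacle beyond invoking strong duality for this finite linear program.
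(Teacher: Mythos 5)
Your proof is correct and follows exactly the LP-duality route the paper alludes to (``duality theory of linear programming gives the following lemma''); the paper itself does not spell out the argument but cites Sherstov and Shi--Zhu for it. The only point worth noting is that the final rescaling to force $\|v\|_1=1$ exactly is a harmless convenience: one could alternatively invoke complementary slackness at the dual optimum (since $E_{d-1}(f)>0$) to conclude $a_x b_x=0$ for every $x$, giving $\|v\|_1=\sum_x(a_x+b_x)=1$ directly, but your rescaling is cleaner and equally valid.
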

Items (2),(3) are used to lower bound the correlation of the witness matrix with the target 
matrix and to upper bound the $\ell_1$ norm of the witness matrix.   In the most difficult step, 
and where these works diverge, Item~(1) is used to upper bound the $\gamma_2^*$ 
(or spectral) norm of the witness matrix.  

We treat each of these frameworks separately in the next two sections.

\subsection{Sherstov's framework}

The proof of the next theorem follows the same steps as Sherstov's proof for pattern matrices 
(Theorem 5.1 \cite{She09b}). Our main contribution is to identify the strongly balanced condition 
as the key property of pattern matrices which enables the proof to work. 

\begin{theorem}
Let $X,Y$ be finite sets, $g: X \times Y \rightarrow \{-1,+1\}$
be a strongly balanced function, and $M_g[x,y]=g(x,y)$ be the
corresponding sign matrix.  Let $f : \{-1,+1\}^n \rightarrow
\{-1,+1\}$ be an arbitrary function.  Then
\begin{equation*}
    Q_{\epsilon}^*(f \comp g^n) \ge \deg_{\epsilon_0}(f) \log_2 \Big(\frac{\sqrt{|X| |Y|}}{\sn{M_g}}
    \Big) - O(1).
\end{equation*}
for any $\epsilon >0$ and $\epsilon_0 > 2\epsilon$.  
\label{thm: block lb}
\end{theorem}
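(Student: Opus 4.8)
By Theorem~\ref{thm: approx trace} it suffices to show that $\gamma_2^{2\epsilon}(M_{f\comp g^n})$ is at least $\big(\sqrt{|X||Y|}/\sn{M_g}\big)^{\deg_{\epsilon_0}(f)}$ up to a constant factor. I would use the dual formulation of the approximation norm (Proposition~\ref{prop:dual_approximate_norm}): it is enough to exhibit a witness matrix $B$ with $\langle M_{f\comp g^n}, B\rangle - 2\epsilon\|B\|_1$ bounded below, and $\gamma_2^*(B)$ small, so that the ratio is large. Following Sherstov, the witness is built from the dual polynomial $v$ of $f$ guaranteed by Lemma~\ref{dual_poly} with parameter $\epsilon_0$ and degree $d=\deg_{\epsilon_0}(f)$. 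Expand $v$ in the Fourier basis, $v=\sum_{|T|\ge d}\hat v_T\chi_T$, and set
\begin{equation*}
B \;=\; \frac{1}{(|X||Y|)^n}\sum_{|T|\ge d}\hat v_T\, M_{\chi_T\comp g^n},
\end{equation*}
with the normalization chosen so that $\langle M_{f\comp g^n},B\rangle$ and $\|B\|_1$ come out as clean functions of $\langle v,f\rangle$ and $\|v\|_1$.

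First I would verify the ``easy'' two of the three ingredients. The correlation $\langle M_{f\comp g^n},B\rangle$: writing $M_{f\comp g^n}=\sum_S\hat f_S\,M_{\chi_S\comp g^n}$ and using that the $M_{\chi_S\comp g^n}$ are mutually orthogonal in the Frobenius inner product when $g$ is strongly balanced (each matrix $M_{\chi_T\comp g^n}=\bigotimes_i M_g^{T[i]}$, and $\langle M_g,J\rangle=0$ makes cross terms vanish), this inner product collapses to (a normalization times) $\sum_{|T|\ge d}\hat f_T\hat v_T=\langle v,f\rangle\ge\epsilon_0$, since $v$ is orthogonal to low-degree characters so only $|T|\ge d$ terms of $f$ survive. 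For $\|B\|_1$, I would bound $\|M_{\chi_T\comp g^n}\|_1$ entrywise — each such matrix is a $\pm1$ matrix of size $(|X||Y|)^n$ — and conclude $\|B\|_1\le\sum_{|T|\ge d}|\hat v_T|=\|v\|_1=1$ after normalization. Thus $\langle M_{f\comp g^n},B\rangle - 2\epsilon\|B\|_1\ge \epsilon_0-2\epsilon>0$, a positive constant.

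The main obstacle is the third ingredient: bounding $\gamma_2^*(B)$, equivalently $\gamma_2^*\big(\sum_{|T|\ge d}\hat v_T M_{\chi_T\comp g^n}\big)$. This is exactly the step where the strongly balanced hypothesis must be exploited as an honest operator (not just Frobenius) orthogonality. The plan is: group the terms of the sum by $|T|=k$ for $k=d,\dots,n$; for a fixed $k$, the matrices $\{M_{\chi_T\comp g^n}: |T|=k\}$ satisfy $M_{\chi_T\comp g^n}M_{\chi_S\comp g^n}^\dagger=0$ for $T\ne S$ — this is precisely the orthogonality computation already carried out in the proof of Theorem~\ref{th:rank_compose}, using $M_gJ^\dagger=J M_g^\dagger=0$. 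Hence by Fact~\ref{fact:ortho} the spectral norm of $\sum_{|T|=k}\hat v_T M_{\chi_T\comp g^n}$ is $\max_{|T|=k}|\hat v_T|\,\sn{M_{\chi_T\comp g^n}} = \max_{|T|=k}|\hat v_T|\cdot\sn{M_g}^k$ (spectral norm is multiplicative under tensor product, and $\sn{J}=\sqrt{|X||Y|}$ cancels against part of the normalization). Then either sum the spectral bounds over $k$ with the triangle inequality and absorb the dependence into the dominant $k=d$ term — since each extra factor of $\sn{M_g}/\sqrt{|X||Y|}\le 1$ only helps — or, more carefully, observe that distinct $k$'s also give operator-orthogonal blocks so Fact~\ref{fact:ortho} applies across $k$ too, giving $\sn{B}$ (and hence, via $\gamma_2^*\le K_G\,\mathrm{disc}$-type relations or directly $\gamma_2^*(B)\le\sqrt{\mathrm{size}}\,\cdot$... ) on the order of $\sn{M_g}^d/(|X||Y|)^{(n-d)/2}\cdot(\text{poly})$. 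Combining, the ratio in Proposition~\ref{prop:dual_approximate_norm} is at least a constant times $\big(\sqrt{|X||Y|}/\sn{M_g}\big)^{d}$; taking logs and invoking Theorem~\ref{thm: approx trace} yields the claimed bound with an additive $O(1)$. I expect the delicate points to be the exact bookkeeping of normalization constants so that $\gamma_2^*$ rather than merely the spectral norm is controlled (one wants $\gamma_2(M_{f\comp g^n})$, so the witness should be tested against $\gamma_2^*$; here one uses $\gamma_2^*(B)\le\sn{B}\cdot\max_{\text{rows/cols}}(\cdots)$, or simply that for the tensor-structured $B$ the two norms agree up to the size factor that is already present), and checking that $\epsilon_0>2\epsilon$ is exactly what is needed for the numerator to stay bounded away from zero.
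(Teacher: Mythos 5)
Your proposal matches the paper's proof in all essential respects: the same dual-polynomial witness (the paper writes it as $\tfrac{2^n}{\size(M_g)^n}v(g(x^1,y^1),\ldots,g(x^n,y^n))$, which is your $B$ up to a harmless scalar), the same operator-orthogonality of the $M_{\chi_T\comp g^n}$'s from the strongly balanced hypothesis — exactly the computation in Theorem~\ref{th:rank_compose} — Fact~\ref{fact:ortho} to collapse the spectral-norm sum to a max, and Proposition~\ref{prop:dual_approximate_norm} with Theorem~\ref{thm: approx trace} to finish. The only deviations are cosmetic bookkeeping: the paper computes $\|B\|_1$ exactly using balancedness rather than via $\sum_T|\hat v_T|$ (note $\sum_T|\hat v_T|\le\|v\|_1$, not $=$), and it bounds the approximate trace norm against $\sn{B}$ and then passes to $\gamma_2$ via $\gamma_2(\cdot)\ge\trn{\cdot}/\sqrt{\size(\cdot)}$ rather than wrestling directly with $\gamma_2^*(B)$, but these are equivalent routes.
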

In particular, this result means that the quantum and randomized
complexities of $f \comp g^n$ are polynomially related whenever
$g$ is strongly balanced and $\log
\tfrac{\sqrt{\size(M_g)}}{\sn{M_g}}$ is polynomially related to
the randomized communication complexity of $g$.  While the complexity measure of $g$ used 
here may look strange at first, Shaltiel \cite{Sha03} has shown that it is closely related
to the discrepancy of $g$ under the uniform distribution, as noted above in Theorem~\ref
{thm:shaltiel}.  This theorem strictly generalizes the case of pattern matrices, 
but it could still be the case that the results of Shi-Zhu can
show bounds not possible with this theorem.

\begin{proof}[Proof of Theorem~\ref{thm: block lb}]
Let $d=\deg_{\epsilon_0}(f)$ and let $v$ be a dual polynomial for $f$ with properties as in
Lemma~\ref{dual_poly}.  We define a witness matrix as
$$
B[x,y]=\frac{2^n}{\size(M_g)^n} v(g(x^1, y^1), \ldots, g(x^n, y^n))
$$

Let us first lower bound the inner product $\braket{M_{f \comp g^n}}{B}$.  Notice that as $M_g$ is
strongly balanced, it is in particular balanced, and so the number of ones (or minus ones) in 
$M_g$ is $\size(M_g)/2$.
\begin{align*}
 \braket{M_{f \comp g^n}}{B} 
= \frac{2^n}{\size(M_g)^n}\sum_{z \in \{-1,+1\}^n} f(z) v(z)
\prod_{i=1}^n \Big( \sum_{\scriptsize x^i, y^i: \atop \scriptsize g(x^i,y^i)=z_i} 1\Big) 
= \braket{f}{v} \ge \epsilon_0
\end{align*}
A similar argument shows that $\|B\|_1=1$ as $\|v\|_1=1$.

Now we turn to evaluate $\sn{B}$.  As shown above, the strongly balanced property of $g$ 
implies that the matrices  $\chi_T \comp g^n$ and $\chi_S \comp g^n$ are 
orthogonal for distinct sets 
$S,T \subseteq \01^n$.  We can thus use Fact~\ref{fact:ortho} to compute as follows.
\begin{align*}
\sn{B}&=\frac{2^n}{\size(M_g)^n}\sn{\sum_{T \subseteq [n]} \hat v_T M_{\chi_T \comp g^n}} \\ & =\frac{2^n}{\size(M_g)^n} \max_{T} |\hat v_T| \sn{M_{\chi_T \comp g^n}} \\
&=\max_{T} \ 2^n |\hat v_T| \prod_{i} \frac{\sn{M_g^{T[i]}}}{\size(M_g)} \\
&\le \max_{T: \hat v_T \ne 0} \prod_{i} \frac{\sn{M_g^{T[i]}}}{\size(M_g)} \\
& = \left(\frac{\sn{M_g}}{\sqrt{\size(M_g)}}\right)^d \left(\frac{1}{\size(M_g)} \right)^{n/2}
\end{align*}
In the second to last step we have used that $|\hat v_T| \le 1/2^n$ as $\|v\|_1=1$, and in the
last step we have used the fact that $\sn{J}=\sqrt{\size(M_g)}$.

Now putting everything together we have
\begin{align*}
\frac{\trn{M_{f \comp g^n}}^{\epsilon_0}}{\sqrt{\size(M_{f \comp g^n})}} \ge
\frac{1}{12} \left(\frac{\sqrt{\size(M_g)}}{\sn{M_g}}\right)^d
\end{align*}
The lower bound on quantum communication complexity now follows from 
Theorem~\ref{thm: approx trace}.
\end{proof}

Using the theorem of Shaltiel relating discrepancy to the spectral norm 
Theorem~\ref{thm:shaltiel}, we get the following corollary:
\begin{corollary}\label{cor:disc}
Let the quantities be defined as in Theorem~\ref{thm: block lb}.
$$
Q_{1/8}^*(f \comp g^n) \geq
\frac{1}{3}\deg_{1/3}(f) \big(\log \big(\frac{1}{\disc_U(M_g)}\big)-7 \big) -O(1).
$$
\end{corollary}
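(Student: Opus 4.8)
The plan is to derive Corollary~\ref{cor:disc} directly from Theorem~\ref{thm: block lb} by feeding in Shaltiel's inequality (Theorem~\ref{thm:shaltiel}); no new idea is needed beyond careful constant-tracking. First I would apply Theorem~\ref{thm: block lb} with the error parameters $\epsilon = 1/8$ and $\epsilon_0 = 1/3$. The admissibility condition $\epsilon_0 > 2\epsilon$ holds since $1/3 > 1/4$, so we get
\[
Q_{1/8}^*(f \comp g^n) \ge \deg_{1/3}(f)\,\log\!\Big(\frac{\sqrt{\size(M_g)}}{\sn{M_g}}\Big) - O(1),
\]
using $\size(M_g) = |X||Y|$.

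Next I would replace the spectral quantity by discrepancy. Shaltiel's bound gives $\disc_U(M_g) \ge \tfrac{1}{108}\big(\sn{M_g}/\sqrt{\size(M_g)}\big)^3$; inverting and taking cube roots yields
\[
\frac{\sqrt{\size(M_g)}}{\sn{M_g}} \;\ge\; \Big(\frac{1}{108\,\disc_U(M_g)}\Big)^{1/3} \;=\; 108^{-1/3}\Big(\frac{1}{\disc_U(M_g)}\Big)^{1/3}.
\]
Taking logarithms (base two, as throughout) and using $\log 108 < 7$, hence $\tfrac13\log 108 < \tfrac73$, this gives
\[
\log\!\Big(\frac{\sqrt{\size(M_g)}}{\sn{M_g}}\Big) \;\ge\; \tfrac13\Big(\log\frac{1}{\disc_U(M_g)} - 7\Big).
\]

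Substituting this into the first display yields exactly the claimed bound
\[
Q_{1/8}^*(f \comp g^n) \;\ge\; \tfrac13\,\deg_{1/3}(f)\Big(\log\frac{1}{\disc_U(M_g)} - 7\Big) - O(1).
\]
The only points meriting attention are: verifying the constraint $\epsilon_0 > 2\epsilon$ on the error parameters; using the correct, nontrivial direction of Shaltiel's theorem, since the easy inequality $\disc_U(M_g) \le \sn{M_g}/\sqrt{\size(M_g)}$ points the wrong way and would not produce a lower bound; and confirming that the multiplicative constant $108^{1/3}$ is swallowed by the additive ``$-7$'' once we pass to logarithms. So the main ``obstacle'' is really just bookkeeping --- the substantive work has already been done in Theorem~\ref{thm: block lb}.
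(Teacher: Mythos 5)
Your proof is correct and is exactly the argument the paper intends (the paper states the corollary ``using the theorem of Shaltiel'' without spelling out the arithmetic). You correctly check $2\epsilon = 1/4 < 1/3 = \epsilon_0$, invoke the nontrivial direction of Theorem~\ref{thm:shaltiel}, and absorb $\tfrac13\log_2 108 \approx 2.25 < 7/3$ into the additive constant, so every step goes through.
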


\noindent {\bf Comparison to Sherstov's pattern matrix}: As mentioned in \cite{She09b}, Sherstov's pattern matrix method can prove quantum lower bound of $\Omega(\deg_{\epsilon}(f))$ for block composed functions $f\circ g^n$ if the matrix $M_g$ contains the following $4\times 4$ one as a submatrix:
\begin{equation*}
	S_4 = 
	\begin{bmatrix}
	 1 & -1 & 1 & -1 \\
   1 & -1 & -1 & 1 \\
   -1 & 1 & 1 & -1 \\
   -1 & 1 & -1 & 1
	\end{bmatrix}
\end{equation*}
In this paper we show that the same lower bound holds as long as $M_g$ contains a 
strongly balanced submatrix or rank greater than one. Are there strongly balanced matrices not 
containing $S_4$ as a submatrix? It turns out that the answer is yes: we give the following 
$6\times 6$ matrix as one example. 
\begin{equation*}
	S_6 = 
	\begin{bmatrix}
		1 & 1 & 1 & -1 & -1 & -1  \\
		1 & 1 & -1 & 1 & -1 & -1  \\
		1 & -1 & -1 & -1 & 1 & 1  \\ 
		-1 & -1 & 1 & 1 & 1 & -1  \\ 
		-1 & 1 & -1 & -1 & 1 & 1  \\ 
		-1 & -1 & 1 & 1 & -1 & 1
	\end{bmatrix}
\end{equation*}

\subsection{Shi-Zhu framework}
The method of Shi-Zhu does not restrict the form of the inner function $g$, but 
rather works for any $g$ which is sufficiently ``hard.''  The hardness condition they 
require is phrased in terms of a somewhat awkward measure they term spectral discrepancy.  

\begin{definition}[spectral discrepancy]
Let $A$ be a $m$-by-$n$ sign matrix.  
The spectral discrepancy of $A$, denoted $\rho(A)$, is the smallest $r$ such that there is 
a submatrix $A'$ of $A$ and a probability distribution $\mu$ on the entries of $A'$ satisfying:
\begin{enumerate}
  \item $A'$ is balanced with respect to $\mu$, i.e.\ the distribution which gives equal weight to 
  $-1$ entries and $+1$ entries of $A'$.
   \item The spectral norm of $A' \hp \mu$ is small: 
   $$\sn{A' \hp \mu} \le \frac{r}{\sqrt{\size(A')}}$$
  \item The entrywise absolute value of the matrix $A' \hp \mu$ should also have a bound on its 
  spectral norm in terms of $r$: 
  $$\sn{|A' \hp \mu|} \le \frac{1+r}{\sqrt{\size(A')}}$$
\end{enumerate}
\end{definition}
While conditions (1),(2) in the definition of spectral discrepancy are quite natural, 
condition~(3) can be complicated to verify.  Note that condition~(3) will always 
be satisfied when $\mu$ is taken to be the uniform distribution.
Using this notion of spectral discrepancy, Shi-Zhu show the following theorem.
\begin{theorem}[Shi-Zhu \cite{SZ09b}]
Let $f : \{-1,+1\}^n \rightarrow \{-1,+1\}$, and $g: X \times Y
\rightarrow \{-1,+1\}$.  For any $\epsilon$ and $\epsilon_0 > 2\epsilon$,
$$
Q_{\epsilon}(f \comp g^n) \ge \Omega(\deg_{\epsilon_0}(f)).
$$
provided $\rho(M_g) \le \tfrac{\deg_{\epsilon_0}(f)}{2en}$.
Here $e=2.718\ldots$ is Euler's number.
\end{theorem}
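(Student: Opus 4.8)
The plan is to follow the same outline as the proof of Theorem~\ref{thm: block lb}, using the dual polynomial of $f$ to build a witness matrix, but instead of relying on the strongly balanced property to get exact orthogonality of the tensor factors $M_{\chi_T \comp g^n}$, I would pass to a well-chosen submatrix of $M_g$ together with a well-chosen distribution $\mu$ (the ones guaranteed by the definition of $\rho(M_g)$) and then bound the spectral norm of the witness matrix by the triangle inequality, exactly as Shi-Zhu do. By Theorem~\ref{thm: approx trace} it suffices to lower bound the approximate trace norm of (a submatrix of) $M_{f \comp g^n}$, and by Proposition~\ref{prop:dual_approximate_norm} this reduces to constructing a matrix $B$ with large correlation $\braket{M_{f\comp g^n}}{B}$, small $\|B\|_1$, and small $\sn{B}$.

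First I would apply the definition of spectral discrepancy to $g$: let $A' \subseteq M_g$ be the promised submatrix and $\mu$ the promised distribution, so that $A'$ is balanced under $\mu$, $\sn{A' \hp \mu} \le r/\sqrt{\size(A')}$, and $\sn{|A' \hp \mu|} \le (1+r)/\sqrt{\size(A')}$, where $r = \rho(M_g) \le d/(2en)$ with $d = \deg_{\epsilon_0}(f)$. I would then restrict attention to the corresponding submatrix of $M_{f \comp g^n}$ (the rows and columns indexed by $n$-tuples of entries of $A'$), which is a valid submatrix and so has communication complexity no larger. Let $v$ be the dual polynomial of $f$ from Lemma~\ref{dual_poly}. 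The witness is $B = \sum_{T \subseteq [n]} \hat v_T \bigotimes_i (A' \hp \mu)^{T[i]}$, where the $0$-th tensor power is $|A' \hp \mu|$ (this is the role of condition (3): the "all-ones" slot is replaced by $|A' \hp \mu|$ rather than $J$, since $\mu$ need not be uniform). The balancedness of $A'$ under $\mu$ gives $\braket{M_{f\comp g^n \text{ restricted}}}{B} = \braket{f}{v} \ge \epsilon_0$ and $\|B\|_1 = \|v\|_1 = 1$ by the same computation as in Theorem~\ref{thm: block lb}.

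The main step, and the one that differs from Sherstov's framework, is bounding $\sn{B}$. Here the tensor factors are \emph{not} orthogonal, so I would use the triangle inequality together with submultiplicativity of the spectral norm under tensor products:
\begin{align*}
\sn{B} \le \sum_{T \subseteq [n]} |\hat v_T| \prod_{i} \sn{(A'\hp\mu)^{T[i]}}
\le \sum_{T:\, \hat v_T \ne 0} \Big(\frac{1+r}{\sqrt{\size(A')}}\Big)^{n-|T|} \Big(\frac{r}{\sqrt{\size(A')}}\Big)^{|T|}.
\end{align*}
Since $v$ has no Fourier mass on sets of size less than $d$, every surviving $T$ has $|T| \ge d$, so each term is at most $(\size(A'))^{-n/2}(1+r)^{n} (r/(1+r))^{d}$; bounding $(1+r)^n \le e^{rn}$ and summing over the at most $2^n$ sets $T$ (or more carefully, over $\binom{n}{|T|}$ for each size) and using $r \le d/(2en)$ makes the geometric-type sum converge and yields $\sn{B} \le (\size(A'))^{-n/2} \cdot 2^{-\Omega(d)}$. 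Comparing correlation, $\ell_1$ norm, and spectral norm through Proposition~\ref{prop:dual_approximate_norm} then gives $\trn{M_{f\comp g^n\text{ restricted}}}^{2\epsilon} / \sqrt{\size} \ge 2^{\Omega(d)}$, and Theorem~\ref{thm: approx trace} finishes it.

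I expect the delicate point to be the bookkeeping in the $\sn{B}$ bound: one must handle the binomial factor $\binom{n}{|T|}$ counting sets of each size against the shrinking factor $(r/(1+r))^{|T|} \le (r)^{|T|}$, and verify that the hypothesis $\rho(M_g) \le d/(2en)$ is exactly what is needed to dominate $\sum_{k \ge d} \binom{n}{k} r^k$ by something like $2^{-d}$ (using $\binom{n}{k} \le (en/k)^k$ and $k \ge d$). This is where the constant $e$ in the statement comes from, and getting the constants to line up is the one genuinely calculation-heavy part; the rest is a direct transcription of the Theorem~\ref{thm: block lb} argument with $J$ replaced by $|A' \hp \mu|$ and exact orthogonality replaced by the triangle inequality.
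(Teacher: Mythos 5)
The paper does not actually prove this theorem --- it is stated as a citation to Shi--Zhu \cite{SZ09b} --- but your reconstruction is correct and matches both the structure of Shi--Zhu's original argument and the paper's own parallel proof of the $\gamma_2^*$ variant (Theorem~\ref{thm:shi-zhu}): dual polynomial from Lemma~\ref{dual_poly}, restriction to the submatrix $A'$ promised by the spectral-discrepancy definition, triangle inequality in place of the orthogonality/max bound, with condition (3) supplying the spectral-norm bound on the ``all-ones'' tensor slots. The one bookkeeping slip is a normalization: with $B = \sum_T \hat v_T \bigotimes_i (A'\hp\mu)^{T[i]}$ as you defined it, $\|B\|_1 = 2^{-n}$ and $\braket{M_{f\comp g^n}|_{A'}}{B} = 2^{-n}\braket{f}{v}$, not $1$ and $\braket{f}{v}$; inserting the leading $2^n$ factor (as in the paper's Theorem~\ref{thm:shi-zhu} witness) makes your three claims hold simultaneously, and then $|\hat v_T|\le 2^{-n}$ gives the cancellation so that your spectral-norm estimate, and the final bound of $2^{-\Omega(d)}$ via $\binom{n}{k}\le (en/k)^k$ and $r\le d/(2en)$, go through unchanged.
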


Chattopadhyay \cite{Cha08} extended the technique of Shi-Zhu to the case of multiparty 
communication complexity, answering an open question of Sherstov \cite{She08c}.  In doing 
so, he gave a more natural condition on the hardness of $g$ in terms of an upper bound on 
discrepancy frequently used in the multiparty setting and originally 
due to Babai, Nisan, and Szegedy \cite{BNS92}.  
As all that is crucially needed is subadditivity, we do the argument here with $\gamma_2^*$, 
which is essentially equal to the discrepancy.

\begin{theorem}
Let $f : \{-1,+1\}^n \rightarrow \{-1,+1\}$, and $g: X \times Y
\rightarrow \{-1,+1\}$.
Fix $0 < \epsilon < 1/2$, and let $\epsilon_0 > 2\epsilon$.
Then
$$
Q_\epsilon^*(f \comp g^n) \ge \deg_{\epsilon_0}(f) -O(1).
$$
provided there is a distribution $\mu$ which is balanced with respect to $g$ and
for which $\gamma_2^*(M_g \hp \mu) \le \tfrac{\deg_{\epsilon_0}(f)}{2en}$.
\label{thm:shi-zhu}
\end{theorem}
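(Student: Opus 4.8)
The plan is to follow the same three-move template as in the proof of Theorem~\ref{thm: block lb}, except that the orthogonality argument---unavailable here because $g$ is not assumed strongly balanced---is replaced by plain subadditivity of a norm, exactly as announced before the statement. By Theorem~\ref{thm: approx trace} it suffices to exhibit a witness matrix $B$ for which $\braket{M_{f\comp g^n}}{B}-2\epsilon\|B\|_1$ is bounded below by a positive constant while $\gamma_2^*(B)\le 2^{1-d}$, where $d:=\deg_{\epsilon_0}(f)$; plugging such a $B$ into the dual formulation of Proposition~\ref{prop:dual_approximate_norm} gives $\gamma_2^{2\epsilon}(M_{f\comp g^n})\ge(\epsilon_0-2\epsilon)2^{d-1}$, and one more application of Theorem~\ref{thm: approx trace} yields $Q_\epsilon^*(f\comp g^n)\ge d-1+\log(\epsilon_0-2\epsilon)-2=\deg_{\epsilon_0}(f)-O(1)$.

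For the witness I would take the dual polynomial $v$ of $f$ given by Lemma~\ref{dual_poly} for the parameter $\epsilon_0$ (so that $d=\deg_{\epsilon_0}(f)$) and set
$$
B[x,y]=2^{n}\Big(\prod_{i=1}^{n}\mu[x^i,y^i]\Big)\,v\big(g(x^1,y^1),\ldots,g(x^n,y^n)\big),
$$
the natural $\mu$-weighted analogue of the witness used in Theorem~\ref{thm: block lb}. The two ``easy'' properties come from the balancedness of $\mu$: in each coordinate the $\mu$-mass of $\{(x^i,y^i):g(x^i,y^i)=1\}$ and that of its complement are both $\tfrac12$, so collapsing the sum over $(x,y)$ to a sum over $z=(g(x^i,y^i))_{i}\in\{-1,+1\}^n$ gives $\braket{M_{f\comp g^n}}{B}=\braket{f}{v}\ge\epsilon_0$, and the same collapse applied to $|B[x,y]|$ (valid since $\mu\ge 0$) gives $\|B\|_1=\|v\|_1=1$. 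Hence $\braket{M_{f\comp g^n}}{B}-2\epsilon\|B\|_1\ge\epsilon_0-2\epsilon>0$, as required.

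The heart of the argument is the upper bound on $\gamma_2^*(B)$. Expanding $v=\sum_T\hat v_T\chi_T$ and using property (1) of Lemma~\ref{dual_poly}, namely $\hat v_T=0$ whenever $|T|<d$, one rewrites $B$ as a signed sum, over subsets $T\subseteq[n]$ with $|T|\ge d$, of tensor products $\bigotimes_{i=1}^{n}M_i^{(T)}$ with coefficient $2^{n}\hat v_T$, where $M_i^{(T)}=M_g\hp\mu$ for $i\in T$ and $M_i^{(T)}=[\mu[x,y]]_{x,y}$ for $i\notin T$. Now apply: subadditivity of $\gamma_2^*$; its multiplicativity under tensor products; the bound $|\hat v_T|\le\|v\|_1/2^{n}=2^{-n}$; the fact that the $\gamma_2^*$ norm of a nonnegative matrix whose entries sum to one is at most one; and abbreviate $\delta:=\gamma_2^*(M_g\hp\mu)$. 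This gives
$$
\gamma_2^*(B)\ \le\ \sum_{k=d}^{n}\binom{n}{k}\,\delta^{k}\ \le\ \sum_{k=d}^{n}\Big(\frac{en\delta}{k}\Big)^{k},
$$
and the hypothesis $\delta\le d/(2en)\le k/(2en)$ for every $k\ge d$ forces each summand to be at most $2^{-k}$, whence $\gamma_2^*(B)\le 2^{1-d}$.

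I expect the only non-routine ingredient to be the multiplicativity of $\gamma_2^*$ under tensor products (equivalently of $\gamma_2$, a standard fact about the Schur product operator norm): it is exactly what turns the per-coordinate quantities---$\delta$ on the slots indexed by $T$ and $1$ on the slots carrying $\mu$---into the clean factor $\delta^{|T|}$. Everything else is bookkeeping: the ``balanced'' reductions for the correlation and the $\ell_1$ norm, the Fourier-coefficient estimate (which matches the normalization $\hat v_T=2^{-n}\sum_x v(x)\chi_T(x)$ fixed in the preliminaries), and the elementary inequality $\binom{n}{k}\le(en/k)^{k}$.
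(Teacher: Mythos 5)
Your proof is correct and follows essentially the same route as the paper's: the same $\mu$-weighted witness $B$, the same use of the dual polynomial, and the same triangle-inequality plus tensor-multiplicativity estimate $\gamma_2^*(B)\le\sum_{k\ge d}\binom{n}{k}\delta^k$. The only (immaterial) difference is arithmetic bookkeeping at the end—you correctly get $\gamma_2^*(B)\le 2^{1-d}$ from the geometric tail whereas the paper writes $2^{-d}$; both are absorbed into the $O(1)$ term.
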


\begin{proof}
We again use Proposition~\ref{prop:dual_approximate_norm}, this time with the 
$\gamma_2$ norm instead of the trace norm.
$$
\gamma_2^{\epsilon_0}(M_{f \comp g^n}) = \max_B
\frac{\braket{M_{f \comp g^n}}{B} -\epsilon_0 \|B\|_1}{\gamma_2^*(B)}.
$$
To prove a lower bound we choose a witness matrix $B$ as follows
$$
B[x,y]=2^n \cdot v(g(x^1,y^1), \ldots, g(x^n,y^n)) \cdot \prod_{i=1}^n \mu(x^i,y^i).
$$
where $v$ witnesses that $f$ has approximate degree at least $d=\deg_{\epsilon_0}(f)$.  
This definition is the same as in the previous section where $\mu$ was 
simply the uniform distribution.  As argued before, we have 
$\braket{M_{f \comp g^n}}{B} \ge \epsilon_0$ and $\|B\|_1=1$ because 
$M_g \hp \mu$ is balanced.  

We again expand $B$ as
$$
B=2^n \sum_{T: |T| \ge d} \hat v_T \bigotimes_{i=1}^n (M_g \hp \mu)^{T(i)},
$$
where $(M_g \hp \mu)^1=M_g \hp \mu$ and $(M_g \hp \mu)^0=\mu$.

Now comes the difference with the previous proof.  As we do not have special knowledge 
of the function $g$, we simply bound $\gamma_2^*(B)$ using the triangle inequality.
\begin{align*}
\gamma_2^*(B) &\le 2^n
\sum_{T: |T| \ge d} |\hat v_T| \ \gamma_2^*\left(\bigotimes_{i=1}^n (M_g \hp \mu)^{T(i)} \right)\\
&= 2^n \sum_{T: |T| \ge d} |\hat v_T| \ \gamma_2^*(M_g \hp \mu)^{|T|} \gamma_2^*(\mu)^{n-|T|} \\
&\le \sum_{T: |T| \ge d} \gamma_2^*(M_g \hp \mu)^{|T|},
\end{align*}
where in the last step we have used that $\gamma_2^*(\mu) \le 1$ as $\mu$ is a probability
distribution and that $|\hat v_T| \le 2^{-n}$. In the second step (equality) we used the fact
that $\gamma_2^*$ is multiplicative with respect to tensor product, a property proved
in \cite{LSS08}.
We continue with simple arithmetic:
\begin{align*}
\gamma_2^*(B) &\le \sum_{i=d}^n {n \choose i} \gamma_2^*(M_g \hp \mu)^i \\
&\le \sum_{i=d}^n \left( \frac{en \gamma_2^*(M_g \hp \mu)}{d} \right)^i \\
&\le 2^{-d}
\end{align*}
provided that $\gamma_2^*(M_g \hp \mu) \le \frac{d}{2en}$.
\end{proof}

\section{A general framework for functions composed through a group} \label{sec:group}
In this section we begin the study of more general function composition
through a group $G$.  In this case the outer function $f: G \rightarrow \{-1,+1\}$
is a class function, i.e.\ invariant on conjugacy classes, and the inner function
$g: X \times Y \to G$ has range $G$.  We define the composed function as
$F(x,y)=f(g(x,y))$.  In previous sections of the paper we have just dealt with the case
$G=\Z_2^n$. 

Let us recall the basic idea of the proof of Theorem~\ref{thm: block lb}.  To prove a lower bound on 
the quantum communication complexity for a composed function $f \comp g$, we constructed a 
witness 
matrix $B$ which had non-negligible correlation with $f \comp g$ and small spectral norm.  
To do this, following the work of Sherstov and
Shi-Zhu \cite{She09b,SZ09b}, we considered the dual polynomial $p$ of $f$ using LP duality.  The 
dual polynomial has two important properties, first that $p$ has non-negligible correlation with $f$ 
and 
second that $p$ has no support on low degree polynomials.  We can then use the first property to 
show that the composed function $p \comp g$ will give non-negligible inner 
product with $f \comp g$ and the second to upper bound the spectral norm of $p \comp g$.  
The second of these tasks is the more difficult. In the case of $G = \{-1,+1\}^n$, the degree of a 
character $\chi_T$ is a natural
measure of how ``hard'' the character is --- the larger $T$ is, the smaller the spectral norm of 
$\chi_T \comp g$ will be.  In the general group case, however, it is less clear what the corresponding ``hard'' and ``easy'' characters should be. In Section \ref{sec: general functions}, we will show that this framework actually works for an arbitrary partition of the basis functions into Easy and Hard.  That is for any arbitrary partition of the basis functions into Easy and Hard sets we can follow the plan outlined above and look for a function with support on the Hard set which has non-negligible correlation with $f$.  

In carrying out this plan, one is still left with upper bounding $\|M_{p\circ g}\|$.  Here, as in the Boolean case, it as again very convenient to have an orthogonality condition which can greatly 
simplify the computation of $\sn{M_{p \comp g}}$ and give good bounds.
In the Boolean case we have shown that $M_g$ being strongly balanced implies this key orthogonality
condition. In Section \ref{sec: group} and \ref{sec: abelian group}, we will show that for the general group, the condition is not only about each row and column of matrix $M_g$, but all pairs of rows and pairs of columns.  In the Abelian group case, this reduces to a nice group invariance condition. 

Even after applying the orthogonality condition to use the maximum bound instead of the triangle inequality for $\|M_{p\circ g}\|$, the remaining term $\|M_{\chi_i \comp g}\|$ (where $\chi_i$ is a ``hard" character) is still not easy to upper bound. For block composed functions, fortunately, the tensor structure makes it feasible to compute. Section \ref{sec: block} gives a generalized version of Theorem \ref{thm: block lb}.

\subsection{General framework}\label{sec: general functions}
For a multiset $T$, $x\in T$ means $x$ running over $T$. Thus $T = \{a(s): s\in S\}$ means the multiset formed by collecting $a(s)$ with $s$ running over $S$. 

For a set $S$, denote by $L_{\mathbb C}(S)$ the $|S|$-dimensional vector space over the field $\mbC$ (of complex numbers) consisting of all linear functions from $S$ to $\mbC$, endowed with inner product $\langle \psi, \phi \rangle = \frac{1}{|S|}\sum_{s\in S} \psi(s)\overline{\phi(s)}$. The distance of a function $f\in L_{\mathbb C}(S)$ to a subspace $\Phi$ of $L_{\mathbb C}(S)$, denoted by $d(f,\Phi)$, is defined as $\min\{\delta: \|f'-f\|_\infty \leq \delta, f'\in \Phi\}$, \ie the magnitude of the least entrywise perturbation to turn $f$ into $\Phi$.



In the above setting, Theorem~\ref{thm: block lb} generalizes to the following. 

\begin{theorem}\label{thm: lb, general function}
    Consider a sign matrix $A = [f(g(x,y))]_{x,y}$ where $g: X\times Y \rightarrow S$ for a set $S$, and $f: S\rightarrow \{-1, +1\}$. Suppose that we can find an orthogonal basis functions $\Psi = \{\psi_i: i\in [|S|]\}$ for $L_{\mathbb C}(S)$. For any hardness partition $\Psi = \Psi_{Hard} \uplus \Psi_{Easy}$, let $\delta = d(f, span(\Psi_{Easy}))$. If
    \begin{enumerate}
        \item {\text{\bf(regularity)}} The multiset $\{g(x,y): x\in X, y\in Y\}$ is a multiple of $S$, \ie $S$ repeated for some number of times.
        \item {\text{\bf(orthogonality)}} for all $x,x',y,y'$ and all distinct $\psi_i,\psi_j \in \Psi_{Hard}$,
        \begin{align*}
            \sum_y \psi_i(g(x,y))\overline{\psi_j(g(x',y))} =  \sum_x \psi_i(g(x,y))\overline{\psi_j(g(x,y'))} = 0,
        \end{align*}
    \end{enumerate}
    then
    \begin{align*}
        Q_\epsilon(A) \geq \log_2 \frac{\sqrt{MN}\cdot (\delta -2\epsilon)}{\max_{\psi_i\in \Psi_{Hard}} (\max_g |\psi_i(g)| \cdot \|[\psi_{i}(g(x,y))]_{x,y}\|) } - O(1).
    \end{align*}
\end{theorem}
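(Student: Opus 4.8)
The plan is to mimic the proof of Theorem~\ref{thm: block lb}, replacing the Fourier-analytic ingredients with their abstract counterparts: the dual polynomial of $f$ over the basis $\Psi$, the orthogonality condition in place of the strongly balanced condition, and the regularity condition in place of the fact that a balanced sign matrix has equally many $+1$'s and $-1$'s. Concretely, I would first invoke an LP-duality lemma in the spirit of Lemma~\ref{dual_poly}: since $\delta = d(f,\mathrm{span}(\Psi_{Easy}))$, there is a function $v \in L_{\mathbb C}(S)$ with (i) $\braket{v}{\psi_i}=0$ for every $\psi_i \in \Psi_{Easy}$, (ii) $\|v\|_1 = 1$ (in the $\ell_1$ norm on $S$, i.e.\ $\frac{1}{|S|}\sum_s |v(s)|$ normalized appropriately — I'd fix the normalization so the arithmetic below is clean), and (iii) $\braket{v}{f} \ge \delta$. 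Expand $v = \sum_{\psi_i \in \Psi_{Hard}} \hat v_i \psi_i$ in the orthogonal basis, where $\hat v_i = \braket{v}{\psi_i}/\|\psi_i\|^2$.

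Next I would define the witness matrix $B[x,y] = c \cdot v(g(x,y))$ for an appropriate normalizing constant $c$ (coming from the multiplicity in the regularity condition, analogous to the factor $2^n/\size(M_g)^n$ in Theorem~\ref{thm: block lb}). The regularity condition guarantees that $\braket{A}{B}$ collapses to $\braket{f}{v} \ge \delta$ exactly as in the Boolean proof: grouping the sum over $(x,y)$ by the value $g(x,y)=s$, each $s\in S$ is hit the same number of times, so the $(x,y)$-sum becomes a constant times the $s$-sum $\sum_s f(s)v(s)$. The same grouping gives $\|B\|_1 = \|v\|_1 = 1$ (after choosing $c$ correctly), since balancedness is not needed here — only uniformity of multiplicities, which regularity provides. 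I'd also note that the analogue of $\size(M_g)/2$ ones in the Boolean case is not needed because we are not using the sign structure of the intermediate set, just the multiplicity.

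The main step — and the expected obstacle — is bounding $\sn{B}$. Writing $B = c\sum_{\psi_i \in \Psi_{Hard}} \hat v_i M_{\psi_i \circ g}$ where $M_{\psi_i \circ g}[x,y] = \psi_i(g(x,y))$, the orthogonality condition says precisely that $M_{\psi_i\circ g} M_{\psi_j\circ g}^\dag = M_{\psi_i\circ g}^\dag M_{\psi_j\circ g} = 0$ for distinct $\psi_i,\psi_j \in \Psi_{Hard}$ (the two displayed equations are exactly the $(x,x')$ and $(y,y')$ entries of these two products). Hence Fact~\ref{fact:ortho} gives $\sn{B} = c\max_i |\hat v_i| \sn{M_{\psi_i\circ g}}$, converting the troublesome sum into a maximum. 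Then I'd bound $|\hat v_i|$: from $\|v\|_1=1$ and $\hat v_i = \braket{v}{\psi_i}/\|\psi_i\|^2 = \frac{1}{|S|}\sum_s v(s)\overline{\psi_i(s)}/\|\psi_i\|^2$, one gets $|\hat v_i| \le \|v\|_\infty^{dual} \cdot \max_s|\psi_i(s)| / (\text{normalization})$, which after tracking constants yields a factor $\max_g|\psi_i(g)|$ in the denominator of the final bound — this is the origin of that term in the statement, which has no analogue in Theorem~\ref{thm: block lb} because there $|\chi_T(z)|=1$ always. Combining, $\sn{B} \le (\text{const}) \cdot \max_{\psi_i\in\Psi_{Hard}}\big(\max_g|\psi_i(g)|\cdot \sn{M_{\psi_i\circ g}}\big)/\sqrt{MN}$, where $M=|X|,N=|Y|$; the $\sqrt{MN}$ appears because the normalizing constant $c$ and the sizes interact just as in the Boolean proof. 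Finally, plugging $\braket{A}{B}\ge\delta$, $\|B\|_1=1$, and this bound on $\gamma_2^*(B) \le \sqrt{MN}\,\sn{B}$ (or using $\trn{B}^\epsilon$ directly) into Proposition~\ref{prop:dual_approximate_norm} and then Theorem~\ref{thm: approx trace} gives $Q_\epsilon(A) \ge \log_2\big(\gamma_2^{2\epsilon}(A)\big) - O(1) \ge \log_2\frac{\sqrt{MN}(\delta-2\epsilon)}{\max_{\psi_i\in\Psi_{Hard}}(\max_g|\psi_i(g)|\cdot\sn{M_{\psi_i\circ g}})} - O(1)$, as claimed. The hard part is purely bookkeeping: getting the normalization of $v$, of $B$, and of the inner product on $L_{\mathbb C}(S)$ all mutually consistent so that the constants collapse to the clean ratio in the theorem; the conceptual content is entirely carried by the two hypotheses, regularity handling the correlation/norm and orthogonality handling the spectral bound.
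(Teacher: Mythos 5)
Your proposal follows essentially the same path as the paper's proof: construct a dual function $h$ supported on $\Psi_{Hard}$ via LP duality, set $B = \frac{1}{K}[h(g(x,y))]$ with $K = MN/|S|$, use regularity to collapse $\langle A,B\rangle$ and $\|B\|_1$ to quantities on $S$, interpret the orthogonality hypothesis as pairwise matrix-orthogonality of the $M_{\psi_i\circ g}$, and then apply Fact~\ref{fact:ortho} to replace the sum by a maximum before feeding everything into Proposition~\ref{prop:dual_approximate_norm} and Theorem~\ref{thm: approx trace}. You also correctly locate the source of the $\max_g|\psi_i(g)|$ factor (the bound on $|\hat h_i|$, the paper's Eq.~\eqref{eq:l-infty bound of F(f)}), so modulo the normalization bookkeeping you explicitly deferred, this is the paper's argument.
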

Using the idea of finding a certificate of the high approximate degree by duality \cite{She09b,SZ09b}, we have the following fact analogous to Lemma \ref{dual_poly}.
\begin{lemma}\label{lem: duality}
For a function $f: S\rightarrow \mbC$ and a subspace $\Phi$ of $L_{\mathbb C}(S)$, if $d(f,span(\Phi)) = \delta$, then there exists a function $h$ \st
\begin{align}
    &\label{eq: dual fn, hard} \hat h_i = 0, \ \forall \psi_i\in \Phi
    \\
    &\label{eq: dual fn, 1-norm} \sum_{g\in G} |h(g)| \leq 2,
    \\
    &\label{eq: dual fn, correlation} |\sum_{g\in G} f(g) \overline{h(g)}| > \delta
\end{align}
\end{lemma}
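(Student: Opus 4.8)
The plan is to produce $h$ as a dual certificate for the $\ell_\infty$-approximation problem that defines $\delta$, in direct analogy with the proof of Lemma~\ref{dual_poly} (and of Proposition~\ref{prop:dual_approximate_norm}). Recall that $\delta = d(f,\mathrm{span}(\Phi)) = \min\{\|f-f'\|_\infty : f'\in\mathrm{span}(\Phi)\}$ is the value of a finite-dimensional convex minimization (a genuine linear program when $f$ and the basis functions are real-valued). The first step is to invoke finite-dimensional duality --- the separating hyperplane theorem, exactly the tool behind Lemma~\ref{dual_poly} --- which expresses the distance from a point to a subspace as
$$
d(f,\mathrm{span}(\Phi)) = \max\big\{\, |\langle f, h\rangle| \ :\ \langle h,\psi\rangle = 0 \ \text{for all}\ \psi\in\Phi,\ \ \|h\|_* \le 1 \,\big\},
$$
where $\|\cdot\|_*$ is the norm on $L_{\mathbb C}(S)$ dual to $\|\cdot\|_\infty$ with respect to the given inner product. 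In particular the maximizer $h_0$ is automatically orthogonal to every $\psi\in\Phi$.

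The second step is the normalization bookkeeping. Since $\langle\psi,\phi\rangle = \tfrac{1}{|S|}\sum_{s\in S}\psi(s)\overline{\phi(s)}$, taking $\psi(s) = h(s)/|h(s)|$ shows $\|h\|_* = \max_{\|\psi\|_\infty\le 1}|\langle\psi,h\rangle| = \tfrac{1}{|S|}\sum_{s\in S}|h(s)|$. So the duality statement hands us an $h_0$ with $\langle h_0,\psi\rangle = 0$ for all $\psi\in\Phi$, with $\sum_{s\in S}|h_0(s)| \le |S|$, and with $|\sum_{s\in S} f(s)\overline{h_0(s)}| = |S|\,\delta$. The third step is a rescaling that at once supplies the slack factor $2$ and upgrades the tight equality to the required strict inequality: put $h := \tfrac{2}{|S|}\,h_0$. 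Then $h$ is still orthogonal to every $\psi_i\in\Phi$, so $\hat h_i = 0$ and \eqref{eq: dual fn, hard} holds; $\sum_{s\in S}|h(s)| \le 2$, which is \eqref{eq: dual fn, 1-norm}; and $|\sum_{s\in S} f(s)\overline{h(s)}| = 2\delta > \delta$, which is \eqref{eq: dual fn, correlation}. This last inequality tacitly uses $\delta>0$, which we may assume: if $\delta=0$ then $f\in\mathrm{span}(\Phi)$ and no $h$ orthogonal to $\Phi$ can have nonzero correlation with $f$, so the lemma is only meaningful in that regime.

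There is no real obstacle here --- the statement is the verbatim group-theoretic generalization of the classical dual-polynomial lemma, and strong duality is automatic in finite dimensions. The only points deserving care are the two just highlighted: tracking the $1/|S|$ factor when translating between the inner product of $L_{\mathbb C}(S)$ and the raw sums appearing in \eqref{eq: dual fn, 1-norm} and \eqref{eq: dual fn, correlation}, and observing that the permitted norm slack (``$\le 2$'' rather than the tight ``$\le 1$'') is exactly what converts the dual value ``$=\delta$'' into ``$>\delta$'' after scaling. I would quote finite-dimensional normed-space duality directly rather than unwind it into an LP, since over $\mathbb C$ the $\ell_\infty$-ball is not polyhedral and only one inequality of strong duality is ever used.
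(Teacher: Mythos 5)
Your proof is correct, and it is essentially the argument the paper is pointing to: the lemma is stated in the paper without an explicit proof, only a citation to the LP-duality arguments of Sherstov and Shi--Zhu, and what you have written is precisely that finite-dimensional normed-space duality instantiated in $L_{\mathbb C}(S)$, with the $1/|S|$ factor from the paper's inner product tracked correctly, the identification of the dual of $\|\cdot\|_\infty$ as the (scaled) $\ell_1$ norm done correctly including the complex case, and the factor-of-$2$ rescaling used exactly to turn the tight dual value $\delta$ into a strict inequality $2\delta > \delta$ under the (necessary, and correctly flagged) assumption $\delta > 0$. Your remark that one should invoke Hahn--Banach-type duality directly rather than an LP, because the complex $\ell_\infty$-ball is not polyhedral, is a legitimate and useful refinement.
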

Using the lemma, we can prove the Theorem \ref{thm: lb, general function}. 
\begin{proof} (of Theorem \ref{thm: lb, general function})
    By the regularity property, we know that when $(x,y)$ runs over $X\times Y$, $g(x,y)$ runs over $S$ exactly $K$ times where $K = MN/|G|$. Consider $B = \frac{1}{K}[h(g(x,y))]_{x,y}$, $h$ obtained by Lemma \ref{lem: duality}; we want to apply Proposition \ref{prop:dual_approximate_norm} and Theorem \ref{thm: approx trace} by using this $B$. First,
\begin{equation}\label{eq: 1-norm of B}
    \|B\|_1 = \frac{1}{K}\sum_{x,y}|h(g(x,y))| = \sum_{g\in G} |h(g)| \leq 1.
\end{equation}
Also,
\begin{align*}
    |\langle A, B \rangle| & = \frac{1}{K} |\sum_{x,y} f(g(x,y))\overline{h(g(x,y))}| \\
    & = |\sum_{g\in G} f(g)\overline{h(g)}| > \delta
\end{align*}
Now we need to compute \[\|B\| = \frac{1}{K}\big\|[\sum_{\chi_i\in Hard} \hat h_i \chi_i(g(x,y))]_{x,y}\big\|.\] Note that
\begin{align*}
    & [\psi_i(g(x,y))]_{x,y}^\dag [\psi_j(g(x,y))]_{x,y} 
    =  [\sum_x \overline{\psi_i(g(x,y))}\psi_j(g(x,y'))]_{y, y'}
\end{align*}
and
\begin{align*}
    & [\psi_i(g(x,y))]_{x,y} [\psi_j(g(x,y))]_{x,y}^\dag 
    =  [\sum_y \psi_i(g(x,y))\overline{\psi_j(g(x',y))}]_{x, x'}.
\end{align*}
Thus the orthogonality condition implies that
\begin{align*}
    & [\psi_i(g(x,y))]_{x,y}^\dag [\psi_j(g(x,y))]_{x,y} 
    =  [\psi_i(g(x,y))]_{x,y} [\psi_j(g(x,y))]_{x,y} ^\dag = 0
\end{align*}
for all $i\neq j$. Now as in \cite{She09b}, we can use the max bound
\begin{align*}
    \|B\| & = \frac{1}{K}\max_{i: \psi_i\in \Psi_{Hard}} \big\|\hat h_{i} [\psi_{i}(x,y)]_{x,y}\big\| \\
    & \leq \frac{1}{K}\max_{i: \psi_i\in Hard} |\hat h_{i}| \max_{\psi_i\in Hard} \big\| [\psi_{i}(x,y)]_{x,y}\big\| \\
    & \leq \frac{1}{K|G|} \max_{\psi_i\in Hard} \big( \max_{g} |\psi_i(g)| \cdot \big\|[\psi_{i}(x,y)]_{x,y} \big\|\big).
\end{align*}
where the last inequality is due to Eq. \eqref{eq:l-infty bound of F(f)} and Eq. \eqref{eq: 1-norm of B}. Finally note $K|G| = MN$ to complete the proof.
\end{proof}

In the Boolean block composed function case, the regularity condition reduces to the matrix $[g(x,y)]$ being balanced, and later we will prove that the orthogonality condition reduces to the strongly balanced property. From this theorem we can see that the way to partition $\Psi$ into $\Psi_{Easy}$ and $\Psi_{Hard}$ does not really matter for the lower bound proof passing through. However, the partition does play a role when we later bound the spectral norm in the denominator.


\subsection{Functions with group symmetry}\label{sec: group}
For a general finite group $G$, two elements $s$ and $t$ are conjugate, denoted by $s\sim t$, if there exists an element $r\in G$ \st $rsr^{-1} = t$. Define $H$ as the set of all class functions, \ie functions $f$ \st $f(s) = f(t)$ if $s\sim t$. Then $H$ is an $h$-dimensional subspace of $L_{\mathbb C} (G)$, where $h$ is the number of conjugacy classes. The irreducible characters $\{\chi_i: i\in [h]\}$ form an orthogonal basis of $H$. For a class function $f$ and irreducible characters $\chi_i$, denote by $\hat f_i$ the coefficient of $\chi_i$ in expansion of $f$ according to $\chi_i$'s, \ie $\hat f_i = \langle \chi_i, f \rangle =  \frac{1}{|G|}\sum_{g\in G} \chi_i(g)\overline{f(g)}$. An easy fact is that for any $i$, we have
\begin{align}\label{eq:l-infty bound of F(f)}
	 |\hat f_i| & = \frac{1}{|G|}\left|\sum_{g\in G} \chi_i(g) \overline{f(g)}\right| \leq \frac{1}{|G|}\sum_{g\in G} |f(g)| |\chi_i(g)| \leq \Big(\frac{1}{|G|} \sum_{g\in G}|f(g)|\Big) \cdot \max_{g} |\chi_i(g)|.
\end{align}
If $G$ is Abelian, then it always has $|\chi_i(g)| = 1$, thus $\max_i |\hat f_i| \leq \frac{1}{|G|} \sum_{g\in G}|f(g)|$. For general groups, we have $|\chi_i(g)| \leq \deg(\chi_i)$, where $\deg(\chi_i)$ is the degree of $\chi_i$, namely the dimension of the associated vector space.

In this section we consider the setting that $S$ is a finite group $G$. The goal is to exploit properties of group characters to give better form of the lower bound. In particular, we hope to see when the second condition holds and what the matrix operator norm $\|[\psi_(g(x,y))]_{x,y}\|$ is  in this setting.

The standard orthogonality of irreducible characters says that $\sum_{s\in G} \chi_i(s) \overline{\chi_j(s)} = 0$. The second condition in Theorem \ref{thm: lb, general function} is concerned with a more general case: For a multiset $T$ with elements in $G\times G$, we need
\begin{equation}\label{eq: orthogonal graph requirement}
    \sum_{(s, t)\in T} \chi_i(s) \overline{\chi_j(t)} = 0, \qquad \forall i\neq j.
\end{equation}
The standard orthogonality relation corresponds to the special that $T = \{(s,s): s\in G\}$. We hope to have a characterization of a multiset $T$ to make Eq. \eqref{eq: orthogonal graph requirement} hold.

We may think of the a multiset $T$ with elements in set $S$ as a function on $S$, with the value on $s\in S$ being the multiplicity of $s$ in $T$. Since characters are class functions, for each pair $(C_k, C_l)$ of conjugacy classes, only the value $\sum_{g_1\in C_k, t\in C_l} T(g_1, t)$ matters for the sake of Eq. \eqref{eq: orthogonal graph requirement}. We thus make $T$ a class function by taking average within each class pair $(C_k, C_l)$. That is, define a new function $T'$ as
\begin{equation*}
    T'(s, t) = \sum_{s\in C_k, t\in C_l} T(s, t) / (|C_k||C_l|), \ \forall s\in C_k,\ \forall t\in C_l.
\end{equation*}

\begin{proposition}\label{prop: general group condition}
    For a finite group $G$ and a multiset $T$ with elements in $G\times G$, the following three statements are equivalent:
    \begin{enumerate}
        \item $\sum_{(s, t)\in T} \chi_i(s)\overline{\chi_j(t)} = 0, \ \forall i\neq j$
        \item $T'$, as a function, is in $span\{\chi_i \otimes \overline{\chi_i}: i\in[h]\}$
        \item $[T'(s,t)]_{s,t} = C^\dagger D C$ where $D$ is a diagonal matrix and $C = [\chi_i(s)]_{i,s}$. That is, $T'$, as a matrix, is normal and diagonalized exactly by the irreducible characters.
    \end{enumerate}
\end{proposition}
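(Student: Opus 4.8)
The plan is to reformulate everything in terms of class functions on the product group $G\times G$ and then read off the condition coefficient by coefficient in a suitable orthonormal basis. Two structural facts set this up. First, the conjugacy classes of $G\times G$ are exactly the products $C_k\times C_l$, and the irreducible characters of $G\times G$ are the functions $(s,t)\mapsto\chi_i(s)\chi_j(t)$; since the complex conjugate $\overline{\chi_j}$ of an irreducible character is again an irreducible character (that of the conjugate representation), the family $\{\chi_i\otimes\overline{\chi_j}:i,j\in[h]\}$, where $(\chi_i\otimes\overline{\chi_j})(s,t)=\chi_i(s)\overline{\chi_j(t)}$, is merely a reindexing of $\mathrm{Irr}(G\times G)$, hence an orthonormal basis of the $h^2$-dimensional space of class functions on $G\times G$ under the normalized inner product of $L_{\mathbb{C}}(G\times G)$. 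Second, $T'$ is constant on every $C_k\times C_l$, so it is a class function on $G\times G$, and for every class function $\psi$ on $G\times G$ one has $\sum_{(s,t)\in T}\psi(s,t)=\sum_{s,t}T'(s,t)\,\psi(s,t)$, because averaging $T$ over each class-pair does not change the weighted sum against a function that is constant on that class-pair.

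For $(1)\Leftrightarrow(2)$, I would expand $T'=\sum_{i,j}c_{ij}\,\chi_i\otimes\overline{\chi_j}$ in the basis above. Applying the averaging identity to the class function $(s,t)\mapsto\overline{\chi_i(s)}\chi_j(t)$ yields $c_{ij}=\langle T',\chi_i\otimes\overline{\chi_j}\rangle=|G|^{-2}\,\overline{\sum_{(s,t)\in T}\chi_i(s)\overline{\chi_j(t)}}$. Thus statement (1), which says all these sums vanish for $i\neq j$, is literally the statement that $c_{ij}=0$ for all $i\neq j$, i.e.\ that $T'\in span\{\chi_i\otimes\overline{\chi_i}:i\in[h]\}$, which is (2).

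For $(2)\Leftrightarrow(3)$, observe that since $\chi\mapsto\overline\chi$ permutes $\mathrm{Irr}(G)$ we have $span\{\chi_i\otimes\overline{\chi_i}:i\}=span\{\overline{\chi_i}\otimes\chi_i:i\}$, so (2) says precisely that $T'(s,t)=\sum_i d_i\,\overline{\chi_i(s)}\chi_i(t)$ for some scalars $d_i$; with $D=\mathrm{diag}(d_i)$ this is exactly the matrix identity $[T'(s,t)]_{s,t}=C^\dagger D C$. The ``normal and diagonalized by the irreducible characters'' clause then follows from the first orthogonality relation $CC^\dagger=|G|\,I_h$: the rows of $C$ are pairwise orthogonal, so (after rescaling $C$ to have orthonormal rows and completing to a basis with the kernel) $C^\dagger D C$ is normal and is diagonalized precisely by the irreducible characters; conversely any matrix of the form $C^\dagger D C$ lies in the span, giving (3).

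The argument is entirely routine; the only things to watch are the bookkeeping with complex conjugation — verifying that $\overline{\chi_j}$ is still an irreducible character, so that $\{\chi_i\otimes\overline{\chi_j}\}$ really is an orthonormal basis and the restriction $i\neq j$ in (1) transfers without loss — and remembering that it is $T'$, not $T$, that appears in (2) and (3), which is exactly what the averaging identity is there to handle.
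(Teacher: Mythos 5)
Your proof follows the same approach as the paper's: expand $T'$ in the orthogonal basis $\{\chi_i\otimes\overline{\chi_j}\}$ of class functions on $G\times G$, read off $(1)\Leftrightarrow(2)$ from the vanishing of the off-diagonal Fourier coefficients, and recognize the diagonal span as matrices of the form $C^\dagger D C$ for $(2)\Leftrightarrow(3)$. You are in fact a bit more careful than the paper at two points: you justify via the structure of $\mathrm{Irr}(G\times G)$ that $\{\chi_i\otimes\overline{\chi_j}\}$ is a reindexing of the irreducible characters, and you explicitly invoke the fact that complex conjugation permutes $\mathrm{Irr}(G)$ to reconcile the span form $T'(s,t)=\sum_i\alpha_i\chi_i(s)\overline{\chi_i(t)}$ with the actual entries $\sum_i d_i\overline{\chi_i(s)}\chi_i(t)$ of $C^\dagger D C$, a step the paper's proof passes over without comment.
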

\begin{proof}
    Let $H_2$ be the subspace consisting functions $f: G\times G \rightarrow \mbC$ \st $f(s, t) = f(s', t')$ if $s \sim s'$, $t \sim t'$. Note that for direct product group $G\times G$, $\{\chi_i\otimes \overline{\chi_j}: i,j\}$ form an orthogonal basis of $H_2$:
    \begin{align*}
        & \sum_{s, t\in G} \chi_{i}(s)\overline{\chi_{j}(t)}\overline{\chi_{i'}(s)\overline{\chi_{j'}(t)}} =  \big(\sum_{s\in G}\chi_{i}(s)\overline{\chi_{i'}(s)}\big) \big(\sum_{t\in G}\overline{\chi_{j}(t)}\chi_{j'}(t)\big) = 0
    \end{align*}
    unless $i = i'$ and $j = j'$. Note that by viewing $T$ as a function from $G\times G$ to $\mbC$, the Eq. \eqref{eq: orthogonal graph requirement} and the definition of $T'$ imply that
    \begin{equation*}
        \langle \chi_i\otimes \overline{\chi_j},\ T \rangle = 0, \forall i\neq j
    \end{equation*}
    Thus the first two statements are equivalent.

    Note that
    \begin{align*}
        T'\in span\{\chi_i \otimes \overline{\chi_i}: i\in[h]\}
        \quad \Leftrightarrow  \quad T'(s, t) = \sum_i \alpha_i \chi_i(s) \overline{\chi_i(t)} \quad \text{ for some  $\alpha_i$'s}
    \end{align*}
    Denote by $C_{h\times |G|} = [\chi_i(g)]_{i,g}$ the matrix of the character table. Then observe that the summation in the last equality is nothing but the $(s, t)$ entry of the matrix $C^\dag diag(\alpha_1, \cdots, \alpha_h) C$. Therefore the equivalence of the second and third statements follows.
\end{proof}

\subsection{Abelian group}\label{sec: abelian group}
When $G$ is Abelian, we have further properties to use. The first one is that $|\chi_i(g)| = 1$ for all $i$. The second one is that the irreducible characters are homomorphisms of $G$; that is, $\chi_i(st) = \chi_i(s)\chi_i(t)$. This gives a clean characterization of the orthogonality condition by group invariance. For a multiset $T$, denote by $sT$ another multiset obtained by collecting all $st$ where $t$ runs over $T$. A multiset $T$ with elements in $G\times G$ is \emph{$G$ invariant} if it  satisfies $(g,g)T = T$ for all $g\in G$. We can also call a function $T: G\times G\rightarrow \mbC$ $G$ invariant if $T(s,t) = T(rs,rt)$ for all $r,s,t\in G$. The overloading of the name is consistent when we view a multiset $T$ as a function (counting the multiplicity of elements).

\begin{proposition}\label{prop: Abelian group invariant condition}
    For a finite Abelian group $G$ and a multiset $T$ with elements in $G\times G$,
    \begin{equation}\label{eq: orthogonal graph, Abelian}
        \text{$T$ is $G$ invariant} \Leftrightarrow \sum_{(s, t)\in T} \chi_i(s)\overline{\chi_j(t)} = 0, \quad \forall i\neq j.
    \end{equation}
\end{proposition}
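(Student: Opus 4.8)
The plan is to prove both implications directly, leaning on the two special features of Abelian $G$ noted just before the proposition: $|\chi_i(g)| = 1$ for all $i,g$, and each $\chi_i$ is a homomorphism, so $\chi_i(st) = \chi_i(s)\chi_i(t)$ and $\overline{\chi_j(g)} = \chi_j(g)^{-1} = \chi_j(g^{-1})$. Throughout, write $A_{ij}(T) = \sum_{(s,t)\in T} \chi_i(s)\overline{\chi_j(t)}$ for the quantity appearing in the statement, and regard the multiset $T$ interchangeably with the function on $G\times G$ counting multiplicities.

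For the forward direction, suppose $T$ is $G$-invariant, i.e.\ $(g,g)T = T$ for every $g\in G$. Reindexing the sum defining $A_{ij}(T)$ via $(s,t)\mapsto (gs,gt)$ and using the homomorphism property gives $A_{ij}(T) = \chi_i(g)\overline{\chi_j(g)}\,A_{ij}(T)$ for every $g\in G$. When $i\neq j$, the character $\chi_i\overline{\chi_j} = \chi_i\chi_j^{-1}$ is nontrivial, so some $g$ satisfies $\chi_i(g)\overline{\chi_j(g)} \neq 1$, which forces $A_{ij}(T) = 0$.

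For the converse, expand $T$, viewed as a (real, nonnegative) function on $G\times G$, in the orthogonal basis $\{\chi_i\otimes\overline{\chi_j}\}_{i,j}$ of $L_{\mathbb{C}}(G\times G)$, exactly as in the proof of Proposition~\ref{prop: general group condition}; since $G$ is Abelian all conjugacy classes are singletons, so the averaged function $T'$ there equals $T$. The hypothesis $A_{ij}(T) = 0$ for all $i\neq j$ says precisely that the coefficient of $\chi_i\otimes\overline{\chi_j}$ vanishes off the diagonal, hence $T(s,t) = \sum_i \alpha_i\,\chi_i(s)\overline{\chi_i(t)}$ for suitable scalars $\alpha_i$. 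Then for any $r\in G$, the homomorphism property together with $|\chi_i(r)| = 1$ yields $T(rs,rt) = \sum_i \alpha_i\,\chi_i(r)\overline{\chi_i(r)}\,\chi_i(s)\overline{\chi_i(t)} = \sum_i \alpha_i\,\chi_i(s)\overline{\chi_i(t)} = T(s,t)$, so $T$ is $G$-invariant. (One could equivalently observe that the diagonal orbit of $(s,t)$ is determined by $st^{-1}$, so a $G$-invariant function is exactly one of the form $T(s,t) = \phi(st^{-1})$, and expanding $\phi$ in the characters of $G$ gives the same conclusion.)

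There is no serious obstacle here; the only step requiring care is the converse, where one must use that the $\chi_i$ are genuine homomorphisms — not merely class functions — to pass from ``off-diagonal character coefficients vanish'' back to honest $G$-invariance. This is exactly where Abelianness is essential, and it is precisely what breaks down for general $G$, which is why Proposition~\ref{prop: general group condition} only obtains the weaker conclusion phrased in terms of the averaged function $T'$.
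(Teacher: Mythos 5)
Your proof is correct and follows essentially the same route as the paper's: for the forward direction you reindex via the diagonal action and use the homomorphism property (you pick a single witness $g$ with $\chi_i(g)\overline{\chi_j(g)}\neq 1$, whereas the paper averages over all $r\in G$ and invokes character orthogonality, but these are cosmetically different ways of drawing the same conclusion); for the converse you both expand $T$ in the basis $\{\chi_i\otimes\overline{\chi_j}\}$, observe that the hypothesis kills the off-diagonal coefficients, and check directly that each $\chi_i\otimes\overline{\chi_i}$ is $G$-invariant. Your added remark that $T'=T$ because Abelian conjugacy classes are singletons is a small but welcome clarification of the citation to Proposition~\ref{prop: general group condition}.
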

\begin{proof}
    $\Rightarrow$: Since $T$ is $G$ invariant, $T = (r,r)T$ and thus,
    \begin{align}
        \sum_{(s, t)\in T} \chi_i(s)\overline{\chi_j(t)} & = \sum_{(s, t)\in (r,r)T} \chi_i(s)\overline{\chi_j(t)} \\
        & = \sum_{(s', t')\in T} \chi_i(rs')\overline{\chi_j(rt')}
    \end{align}
    Now using the fact that irreducible characters of Abelian groups are homomorphisms, we have
    \begin{align*}
        \sum_{(s', t')\in T} \chi_i(rs')\overline{\chi_j(rt')}
        = & \sum_{(s', t')\in T} \chi_i(r) \chi_i(s')\overline{\chi_j(r)}\overline{\chi_j(t')} \\
        = & \chi_i(r)\overline{\chi_j(r)} \Big(\sum_{(s', t')\in T} \chi_i(s')\overline{\chi_j(t')}\Big)
    \end{align*}
    But note that this holds for any $r\in G$, thus also for the average of them. That is,
    \begin{align*}
        \sum_{(s, t)\in T} \chi_i(s)\overline{\chi_j(t)} = & \frac{1}{|G|} \Big(\sum_{r\in G} \chi_i(r)\overline{\chi_j(r)}\Big) \Big(\sum_{(s', t')\in T} \chi_i(s')\overline{\chi_j(t')}\Big) = 0,
    \end{align*}
    by the standard orthogonality property of different irreducible characters.

    $\Leftarrow$: Since $\sum_{(s, t)\in T} \chi_i(s)\overline{\chi_j(t)} = 0$, $\forall i\neq j$, we know that $T$ as a function is in $span\{\chi_i \otimes \overline{\chi_i}: i\}$. Note that any linear combination of $G$ invariant functions is also $G$ invariant. Thus it remains to check that each basis $\chi_i \otimes \overline{\chi_i}$ is $G$ invariant, which is easy to see:
    \begin{equation*}
        \chi_i(rs) \overline{\chi_i(rt)} = \chi_i(r)\chi_i(s)  \overline{\chi_i(r)}\overline{\chi_i(t)} =\chi_i(s) \overline{\chi_i(t)}.
    \end{equation*}
    This finishes the proof.
\end{proof}

Another nice property of Abelian groups is that the orthogonality condition condition implies the regularity one.
\begin{proposition}\label{prop: Abelian ortho reg}
    For an Abelian group $G$, if either $T^{y,y}$ is $G$ invariant for all $y$ or $S^{x,x}$ is $G$ invariant for all $x$, then $G|\{g(x,y): x\in X, y\in Y\}$.
\end{proposition}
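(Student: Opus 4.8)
The plan is to notice that $G$-invariance of a \emph{diagonal} pair-multiset is simply one-sided translation invariance of an ordinary multiset in $G$, that such invariance forces the multiset to be a uniform multiple of $G$, and that the regularity of $\{g(x,y)\}$ then drops out by breaking the entry set of $M_g$ into its columns (or rows).

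Concretely, suppose first that $T^{y,y}$ is $G$-invariant for every $y \in Y$, where $T^{y,y} = \{(g(x,y),g(x,y)) : x \in X\}$ is the multiset of diagonal pairs read off column $y$. Projecting onto the first coordinate identifies $T^{y,y}$ with the multiset $D_y = \{g(x,y) : x \in X\}$ in $G$. Since every element of $T^{y,y}$ lies on the diagonal $\{(s,s) : s \in G\}$, the hypothesis $(r,r)T^{y,y} = T^{y,y}$ for all $r \in G$ is exactly the condition $rD_y = D_y$ for all $r \in G$. Letting $m$ denote the multiplicity function of $D_y$, the equality $rD_y = D_y$ reads $m(r^{-1}s) = m(s)$ for all $r,s \in G$; taking $s$ to be the identity gives $m(r^{-1}) = m(e)$ for every $r$, so $m$ is constant, whence $D_y = \tfrac{|X|}{|G|}\cdot G$ (and in particular $|G|$ divides $|X|$). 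This is just transitivity of the left-regular action of $G$ on itself.

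Since $\{g(x,y) : x \in X,\ y \in Y\}$ is the disjoint multiset union of the $D_y$ over $y \in Y$, it equals $\tfrac{|X|\,|Y|}{|G|}\cdot G$, so $G$ divides $\{g(x,y): x\in X, y\in Y\}$, as required. The case where $S^{x,x} = \{(g(x,y),g(x,y)) : y \in Y\}$ is $G$-invariant for every $x$ is handled identically: project $S^{x,x}$ onto $R_x = \{g(x,y) : y \in Y\}$ and split the entry set into rows instead of columns. There is no real obstacle in this argument; the one point worth flagging is that only the ``diagonal'' instances ($y = y'$, resp.\ $x = x'$) of the orthogonality condition are used, and that for these the $G$-invariance of a pair-multiset degenerates to invariance of a single multiset in $G$, which is far easier to exploit.
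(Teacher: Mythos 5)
Your proof is correct and follows essentially the same route as the paper: identify the diagonal multiset $T^{y,y}$ with the column multiset $\{g(x,y): x\in X\}$, observe that $G$-invariance forces constant multiplicity (so each column is a uniform multiple of $G$), and take the union over $y$. You fill in a bit more detail than the paper (the explicit multiplicity argument via the left-regular action, and the observation that Abelianness is not actually needed here), and you also avoid a typo in the paper's proof, which writes $|Y|/|G|$ where $|X|/|G|$ is meant.
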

\begin{proof}
    Note that $T^{y,y}(s,s) = |\{x:g(x,y) = s\}|$, thus $T^{y,y}$ being $G$ invariant implies that $|\{x: g(x,y) = s\}| = |\{x: g(x,y) = t\}|$ for all $s,t\in G$. Thus the column $y$ in matrix $[g(x,y)]_{x,y}$, when viewed as a multiset, is equal to $G$ repeated $|Y|/|G|$ times. Therefore the whole multiset $\{g(x,y): x\in X, y\in Y\}$ is a multiple of $G$ as well.
\end{proof}

What we finally get for Abelian groups is the following.
\begin{corollary}
    For a sign matrix $A = [f(g(x,y))]_{x,y}$ and an Abelian group $G$, if $d(f,span(Ch_{Easy})) = \Omega(1)$, and the multisets $S^{x,x'} = \{(g(x,y), g(x',y)): y\in Y\}$ and $T^{y,y'} = \{(g(x,y), g(x,y')): x\in X\}$ are $G$ invariant for any $(x,x')$ and any $(y,y')$, then
    \begin{equation*}
        Q(A) \geq \log_2 \frac{\sqrt{MN}}{\max_{i\in Hard} \|[\chi_i(g(x,y))]_{x,y}\|} - O(1).
    \end{equation*}
\end{corollary}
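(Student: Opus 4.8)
The plan is to obtain this corollary as a direct specialization of Theorem~\ref{thm: lb, general function} to the case $S = G$, taking $\Psi$ to be the set of irreducible characters of $G$. Since $G$ is Abelian, every conjugacy class is a singleton, so there are $|G|$ irreducible characters $\{\chi_i : i \in [|G|]\}$ and they form an orthogonal basis of the whole space $L_{\mathbb C}(G)$. We use the hardness partition $\Psi_{Easy} = Ch_{Easy}$, $\Psi_{Hard} = Ch_{Hard}$, so that the quantity $\delta$ in Theorem~\ref{thm: lb, general function} equals $d(f, span(Ch_{Easy})) = \Omega(1)$ by hypothesis.

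First I would verify the two hypotheses of Theorem~\ref{thm: lb, general function}. For the orthogonality condition, observe that $\sum_y \psi_i(g(x,y))\overline{\psi_j(g(x',y))}$ is precisely $\sum_{(s,t)\in S^{x,x'}} \chi_i(s)\overline{\chi_j(t)}$, and $\sum_x \psi_i(g(x,y))\overline{\psi_j(g(x,y'))}$ is precisely $\sum_{(s,t)\in T^{y,y'}} \chi_i(s)\overline{\chi_j(t)}$. By Proposition~\ref{prop: Abelian group invariant condition}, each of these sums vanishes for all $i \ne j$ exactly because $S^{x,x'}$ and $T^{y,y'}$ are assumed $G$-invariant; hence the orthogonality condition of Theorem~\ref{thm: lb, general function} holds for every $x, x', y, y'$. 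For the regularity condition, note that the $G$-invariance of $S^{x,x}$ for every $x$ is the special case $x = x'$ of the hypothesis, so Proposition~\ref{prop: Abelian ortho reg} gives that $\{g(x,y) : x \in X, y \in Y\}$ is a multiple of $G$, which is exactly the regularity condition. (One could equally well use the $G$-invariance of $T^{y,y}$.)

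Applying Theorem~\ref{thm: lb, general function} with these choices then yields
\begin{equation*}
Q_\epsilon(A) \geq \log_2 \frac{\sqrt{MN}\,(\delta - 2\epsilon)}{\max_{\chi_i \in Ch_{Hard}}\big(\max_g |\chi_i(g)| \cdot \|[\chi_i(g(x,y))]_{x,y}\|\big)} - O(1).
\end{equation*}
The remaining simplifications use the structural facts about Abelian characters recorded in Section~\ref{sec: abelian group}: since $|\chi_i(g)| = 1$ for every $i$ and every $g \in G$, the factor $\max_g |\chi_i(g)|$ in the denominator is $1$ and drops out. Finally, because $\delta = \Omega(1)$ we fix $\epsilon$ to be a sufficiently small positive constant (for instance $\epsilon = \delta/4$), so that $\delta - 2\epsilon = \Omega(1)$ and $\log_2(\delta - 2\epsilon) = -O(1)$ is absorbed into the additive $O(1)$ term, leaving the claimed bound on $Q(A)$.

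There is no genuinely hard step here: all of the real content is already in Theorem~\ref{thm: lb, general function} and Propositions~\ref{prop: Abelian group invariant condition} and \ref{prop: Abelian ortho reg}. The only point that needs a little care is the multiset bookkeeping — confirming that $S^{x,x'}$ and $T^{y,y'}$ as defined are exactly the multisets whose $G$-invariance, via Proposition~\ref{prop: Abelian group invariant condition}, translates into the two orthogonality sums of Theorem~\ref{thm: lb, general function} vanishing, and checking that the diagonal case $x = x'$ (or $y = y'$) of the hypothesis is what feeds into the regularity proposition.
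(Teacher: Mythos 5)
Your proposal is correct and takes the same route the paper intends: the corollary is stated without proof as an immediate specialization of Theorem~\ref{thm: lb, general function}, with Propositions~\ref{prop: Abelian group invariant condition} and~\ref{prop: Abelian ortho reg} supplying exactly the translation from $G$-invariance of $S^{x,x'}$ and $T^{y,y'}$ to the orthogonality and regularity hypotheses, and the Abelian fact $|\chi_i(g)|=1$ killing the $\max_g|\psi_i(g)|$ factor. Your multiset bookkeeping and the handling of $\delta-2\epsilon=\Omega(1)$ are exactly the intended details.
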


\subsection{Block composed functions}\label{sec: block}
We now consider a special class of functions $g$: block composed functions. Suppose the group $G$ is a product group $G = G_1 \times \cdots \times G_t$, and $g(x,y) = (g_1(x^1, y^1), \cdots, g_t(x^t, y^t))$ where $x = (x^1, \cdots, x^t)$ and $y = (y^1, \cdots, y^t)$. That is, both $x$ and $y$ are decomposed into $t$ components and the $i$-th coordinate of $g(x,y)$ only depends on the $i$-th components of $x$ and $y$. The tensor structure makes all the computation easy. Theorem \ref{thm: block lb} can be generalized to the general product group case for arbitrary groups $G_i$.
\begin{definition}
	The $\epsilon$-approximate degree of a class function $f$ on product group $G_1\times \cdots \times G_t$, denoted by $d_\epsilon(f)$, is the minimum $d$ \st $\|f-f'\|_\infty \leq \epsilon$, where $f'$ can be represented as a linear combination of irreducible characters with at most $d$ non-identity component characters. 
\end{definition}

\begin{theorem}
    For sign matrix \[A = [f(g_1(x^1,y^1), \cdots, g_t(x^t,y^t)]_{x,y}\] where all $g_i$ satisfy their orthogonality conditions, we have
    \begin{equation*}
        Q(A) \ge \min_{\{\chi_i\}, S} \sum_{i\in S} \log_2 \frac{\sqrt{\size(M_{g_i})}}{\deg(\chi_i)\|M_{\chi_i \comp g_i}\|} - O(1)
    \end{equation*}
    where the minimum is over all $S\subseteq [n]$ with $|S| > \deg_{1/3}(f)$, and all non-identity irreducible characters $\chi_i$ of $G_i$.
\end{theorem}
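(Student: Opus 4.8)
The plan is to run the witness-matrix construction behind Theorem~\ref{thm: lb, general function} with the ground set $S = G = G_1 \times \cdots \times G_t$, and then to cash in the product structure of $g$ to evaluate every quantity explicitly, just as Theorem~\ref{thm: block lb} specializes Sherstov's pattern-matrix argument. Recall that the irreducible characters of a product group are exactly the tensor products $\chi = \chi_1 \otimes \cdots \otimes \chi_t$ of irreducible characters $\chi_i$ of the $G_i$, that $\deg(\chi) = \prod_i \deg(\chi_i)$ and $\max_g |\chi(g)| \le \deg(\chi)$, and that these span the class functions on $G$. Since $f$ is a class function, I would take the orthogonal basis of Theorem~\ref{thm: lb, general function} to consist of these product characters and the hardness partition to be the one dictated by the approximate degree: $\Psi_{Easy}$ is the set of product characters with fewer than $\deg_{1/3}(f)$ non-identity components and $\Psi_{Hard}$ the rest, so that $\delta := d(f, \mathrm{span}(\Psi_{Easy})) > 1/3$ by definition of $\deg_{1/3}(f)$ (and $Q = Q_\epsilon$ for a constant $\epsilon < 1/8$, so $\delta - 2\epsilon > 0$). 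Lemma~\ref{lem: duality}, applied to $f$ and $\mathrm{span}(\Psi_{Easy})$, yields a function $h$, which I would take to be a class function (symmetrize the best $\Psi_{Easy}$-approximant of $f$ over conjugacy classes; this increases neither the $\ell_\infty$ distance to $f$ nor the set of component characters used), so that $h$ expands as $h = \sum_{\chi \in \Psi_{Hard}} \hat h_\chi\, \chi$, with $\sum_g|h(g)| \le 2$ and $|\sum_g f(g)\overline{h(g)}| > \delta$. The witness matrix is $B = \tfrac1K [h(g(x,y))]_{x,y}$ with $K = \size(A)/|G|$; regularity of $g$ — which holds because each $g_i$ is regular (automatic in the Abelian case by Proposition~\ref{prop: Abelian ortho reg}, and part of the hypothesis in general) and a product of regular maps is regular — gives $\|B\|_1 \le 1$ and $|\braket{A}{B}| > \delta$ exactly as in the proof of Theorem~\ref{thm: lb, general function}.

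The heart of the argument is the orthogonality of the matrices $M_{\chi \comp g}$ over $\chi \in \Psi_{Hard}$. Because $g$ is block composed, $M_{\chi \comp g} = \bigotimes_{i=1}^t M_{\chi_i \comp g_i}$, where $M_{\chi_i \comp g_i} := [\chi_i(g_i(a,b))]_{a \in X^i,\, b \in Y^i}$ and this equals the all-ones matrix $J$ whenever $\chi_i$ is the identity character of $G_i$. For two distinct hard characters $\chi = \bigotimes_i \chi_i$ and $\chi' = \bigotimes_i \chi'_i$ there is a coordinate $i$ with $\chi_i \ne \chi'_i$, and then
$$M_{\chi_i \comp g_i}\,(M_{\chi'_i \comp g_i})^\dag = \Big[\textstyle\sum_{b \in Y^i} \chi_i(g_i(a,b))\,\overline{\chi'_i(g_i(a',b))}\Big]_{a,\, a' \in X^i} = 0,$$
and likewise $(M_{\chi_i \comp g_i})^\dag M_{\chi'_i \comp g_i} = 0$, because $g_i$ satisfies its orthogonality condition. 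It is essential here that this condition — Proposition~\ref{prop: general group condition} applied to every pair of rows of $M_{g_i}$, a row paired with itself included, and every pair of columns — forces $\sum_{(s,t) \in T}\psi(s)\overline{\psi'(t)} = 0$ for all distinct irreducible characters $\psi, \psi'$ of $G_i$, the trivial character among them; this covers the case where $\chi_i$ or $\chi'_i$ is the identity, which really does occur since two hard product characters may first disagree at a coordinate carrying the identity on one side. Consequently $M_{\chi \comp g}$ and $M_{\chi' \comp g}$ are orthogonal in the sense of Fact~\ref{fact:ortho}, so by that fact $\sn{B} = \tfrac1K \max_{\chi \in \mathrm{supp}(h)} |\hat h_\chi|\, \sn{M_{\chi \comp g}}$.

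It then remains to bound the surviving term. By \eqref{eq:l-infty bound of F(f)} and $\sum_g|h(g)| \le 2$ one has $|\hat h_\chi| \le \tfrac{2}{|G|}\max_g|\chi(g)| \le \tfrac{2}{|G|}\prod_i \deg(\chi_i)$, and multiplicativity of the spectral norm under tensor product gives $\sn{M_{\chi \comp g}} = \prod_i \sn{M_{\chi_i \comp g_i}}$ with $\sn{J} = \sqrt{\size(M_{g_i})}$ on the identity coordinates. Write $S$ for the set of coordinates on which $\chi$ is non-identity; $\chi$ hard means $|S| \ge \deg_{1/3}(f)$. Substituting $\max_g|\chi(g)| \le \prod_i \deg(\chi_i)$ and $\sn{M_{\chi \comp g}} = \prod_i \sn{M_{\chi_i \comp g_i}}$ into the conclusion of Theorem~\ref{thm: lb, general function}, whose numerator carries the factor $\sqrt{\size(A)} = \prod_i \sqrt{\size(M_{g_i})}$ and which therefore absorbs the identity-coordinate factors $\sqrt{\size(M_{g_i})}$, $i \notin S$ (and using $K|G| = \size(A)$), everything off $S$ cancels and one obtains
$$Q(A) \ge \log_2\Big((\delta-2\epsilon)\min_{\{\chi_i\},\, S}\prod_{i\in S}\frac{\sqrt{\size(M_{g_i})}}{\deg(\chi_i)\,\sn{M_{\chi_i\comp g_i}}}\Big) - O(1) = \min_{\{\chi_i\},\, S}\sum_{i\in S}\log_2\frac{\sqrt{\size(M_{g_i})}}{\deg(\chi_i)\,\sn{M_{\chi_i\comp g_i}}} - O(1),$$
the minimum over $S$ with $|S| \ge \deg_{1/3}(f)$ and all non-identity irreducible characters $\chi_i$ of $G_i$, $i \in S$, with $\log_2(\delta-2\epsilon)$ absorbed into $O(1)$; this is the claimed bound.

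I expect the genuine difficulty to lie entirely in the orthogonality step, and specifically in matching the per-coordinate hypothesis to what the product argument consumes: one needs $M_{\psi \comp g_i}$ to be orthogonal to $J$ for every non-identity $\psi$, not merely that distinct non-identity $M_{\psi \comp g_i}, M_{\psi' \comp g_i}$ are orthogonal, so "$g_i$ satisfies its orthogonality condition" must be understood as the full invariance condition of Proposition~\ref{prop: general group condition} on all pairs of rows and columns — a row or column paired with itself included — rather than the restricted form of item~(2) in Theorem~\ref{thm: lb, general function}. Everything else — choosing $h$ to be a class function, deducing regularity of $g$ from regularity of the $g_i$, and the tensor-multiplicativity bookkeeping for $\sn{M_{\chi \comp g}}$, $\deg(\chi)$, and $\size(A)$ — is routine once this point is settled.
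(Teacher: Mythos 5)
Your proof is correct and follows the same strategy as the paper: apply Theorem~\ref{thm: lb, general function} with the irreducible characters of $G = G_1 \times \cdots \times G_t$ as the basis, partitioned by number of non-identity components, and compute the denominator by tensor multiplicativity of the spectral norm together with $\sn{J} = \sqrt{\size(M_{g_i})}$ on the identity coordinates and $\max_g|\chi(g)| \le \prod_i \deg(\chi_i)$. The paper's own proof is terse, invoking Theorem~\ref{thm: lb, general function} as a black box after the spectral-norm computation; you have done the useful work of unwinding that invocation and checking the hypotheses explicitly. In particular, you correctly pinpoint the one subtlety the paper leaves implicit: for the product matrices $M_{\chi \comp g}$, $M_{\chi' \comp g}$ to be orthogonal whenever $\chi \ne \chi'$ are both hard, the per-coordinate hypothesis must force orthogonality between a non-identity character and the identity character (i.e.\ $M_{\psi \comp g_i} J^\dag = J M_{\psi \comp g_i}^\dag = 0$ for non-identity $\psi$), since two hard characters may first disagree at a coordinate where one side is trivial. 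This is exactly what the full $G$-invariance condition of Propositions~\ref{prop: general group condition}/\ref{prop: Abelian group invariant condition} delivers, and what ``$g_i$ satisfies its orthogonality conditions'' must be read to mean. Your remark about symmetrizing $h$ to a class function in the non-Abelian case is also a genuine gap-filler, since Lemma~\ref{lem: duality} as stated does not by itself guarantee the dual polynomial lies in the span of the irreducible characters.
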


\begin{proof}
Recall that an irreducible character $\chi$ of $G$ is the tensor product of irreducible characters $\chi_i$ of each component group $G_i$. Let Hard be the set of irreducible characters $\chi$ with more than $d$ non-identity component characters. Fix a hard character $\chi$, and denote by $S$ the set of coordinates of its non-identity characters. 
\begin{align*}
	\|[\chi (g(x,y))]\| & = \|\bigotimes_{i\in [t]} [\chi_i (g_i(x^i,y^i))]\| \\
	& = \prod_{i\in [t]} \| [\chi_i (g_i(x^i,y^i))]\| \\
	& = \prod_{i\in S} \| [\chi_i (g_i(x^i,y^i))]\| \times \prod_{i\notin S} \|J_{|X_i|\times |Y_i|}\| \\
	& = \prod_{i\in S} \| M_{\chi_i \comp g_i}\| \times \prod_{i\notin S} \sqrt{\size(M_{g_i})}
\end{align*}
Thus by Theorem \ref{thm: lb, general function} and Eq. \eqref{eq:l-infty bound of F(f)}, we have
\begin{align}
	Q(A) \geq \log_2 \prod_{i\in S} \frac{\sqrt{\size(M_{g_i})}}{\deg(\chi_i)\cdot \|M_{\chi_i\circ g_i}\|} - O(1)
\end{align}
proving the theorem. 
\end{proof}

Previous sections as well as \cite{SZ09b} consider the case where all $g_i$'s are the same and all $G_i$'s are $\mbZ_2$. In this case, the above bound is equal to the one in Theorem \ref{thm: block lb}, and the following proposition says that the group invariance condition degenerates to the strongly balanced property.
\begin{proposition}\label{prop: degeneration}
    For $G = \mbZ_2^{\times t}$, the following two conditions for $g = (g_1, \cdots, g_t)$ are equivalent:
    \begin{enumerate}
        \item The multisets $S^{x,x'} = \{(g(x,y), g(x',y)): y\in Y\}$ and $T^{y,y'} = \{(g(x,y), g(x,y')): x\in X\}$ are $G$ invariant for any $(x,x')$ and any $(y,y')$,
        \item Each matrix $[g_i(x^i,y^i)]_{x^i, y^i}$ is strongly balanced.
    \end{enumerate}
\end{proposition}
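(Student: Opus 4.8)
The plan is to derive Proposition~\ref{prop: degeneration} from the character characterization of $G$-invariance in Proposition~\ref{prop: Abelian group invariant condition}, combined with the product structure of $G = \mathbb{Z}_2^{\times t}$ and of $g = (g_1,\dots,g_t)$. Identifying $\mathbb{Z}_2$ with $\{-1,+1\}$, the irreducible characters of $G$ are $\chi_U(z) = \prod_{i\in U} z_i$ for $U\subseteq[t]$, so by Proposition~\ref{prop: Abelian group invariant condition} condition~(1) is equivalent to: for all $x,x'$ and all distinct $U,V\subseteq[t]$ one has $\sum_{y\in Y}\chi_U(g(x,y))\,\chi_V(g(x',y)) = 0$, together with the symmetric statement obtained by swapping rows and columns, namely $\sum_{x\in X}\chi_U(g(x,y))\,\chi_V(g(x,y')) = 0$ for all $y,y'$ and distinct $U,V$.

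The first step is to expand these sums using the block structure of $g$ and the independence of the coordinates: for instance $\sum_{y}\chi_U(g(x,y))\,\chi_V(g(x',y)) = \prod_{i=1}^{t} c_i$, where the $i$-th factor $c_i$ equals $|Y_i|$ if $i\notin U\cup V$; a row sum of $M_{g_i}$ (indexed by the $i$-th block of $x$ or of $x'$) if $i$ belongs to exactly one of $U$ and $V$; and an inner product of two rows of $M_{g_i}$ if $i\in U\cap V$. The sum $\sum_{x}\chi_U(g(x,y))\,\chi_V(g(x,y'))$ factors in exactly the same way, with column sums and column inner products of the $M_{g_i}$ in place of the row quantities.

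Granting this factorization, both implications are short. For $(2)\Rightarrow(1)$: if every $M_{g_i}$ is strongly balanced and $U\neq V$, pick any $j$ in the symmetric difference of $U$ and $V$, which is nonempty; then $c_j$ is a single row sum of $M_{g_j}$, hence $0$, so the product vanishes, and the analogous argument with the column-sum factorization handles the remaining half of condition~(1). For $(1)\Rightarrow(2)$: suppose some $M_{g_j}$ is not strongly balanced, say one of its columns has nonzero sum (a bad row is symmetric); take $U=\{j\}$, $V=\emptyset$, and $y\in Y$ whose $j$-th block indexes that column. Then all factors $c_i$ with $i\neq j$ equal $|X_i|$, while $c_j$ is the nonzero sum of that column, so $\sum_{x\in X}\chi_{\{j\}}(g(x,y))\neq 0$; since this is exactly the character sum attached to $T^{y,y}$ for the distinct characters $\chi_{\{j\}}$ and $\chi_\emptyset$, Proposition~\ref{prop: Abelian group invariant condition} shows $T^{y,y}$ is not $G$-invariant, so~(1) fails.

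I expect the only point that needs care to be stating the block-expansion identity and its case analysis cleanly: for each coordinate $i$ one must read off whether $c_i$ is a size, a row/column sum, or a row/column inner product according to how $i$ meets $U$ and $V$. Once that bookkeeping is done, each direction is just the remark that a product of integers vanishes iff one factor does, together with the observation that pitting a singleton $U$ against $V=\emptyset$ isolates a single row or column sum of one $M_{g_i}$.
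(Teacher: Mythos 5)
Your proof is correct, but it takes a genuinely different route from the paper's. You pass through the character criterion of Proposition~\ref{prop: Abelian group invariant condition} (which recasts $G$-invariance of a multiset as vanishing of all cross-character sums), then exploit the product structure of $G=\mathbb{Z}_2^{\times t}$ and of $g$ to factor each character sum coordinate-by-coordinate into sizes, row/column sums of the $M_{g_i}$, and row/column inner products. Strong balance of each $M_{g_i}$ then kills the sum whenever the symmetric difference of $U$ and $V$ is nonempty, and conversely a failure of strong balance is exposed by pitting a singleton character $\chi_{\{j\}}$ against $\chi_\emptyset$. The paper instead argues directly from the multiset definition of $G$-invariance: for $(1)\Rightarrow(2)$ it specializes the invariance condition to $x'=x$, $u=v$, and a shift $z$ supported on one coordinate to read off row/column balance; for $(2)\Rightarrow(1)$ it reduces to a single coordinate and verifies the required count identities by solving a small linear system in the four pair-counts $a_{bb'}$. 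Your approach is more modular (it reuses the earlier character criterion rather than re-deriving its content by hand) and makes the role of the tensor factorization explicit, which is the same mechanism that drives the block-composed lower bound in Section~\ref{sec: block}; the paper's proof is more elementary and self-contained, working only with multiset counts. Both are complete; your version makes it slightly clearer where the argument would break for $|G_i|>2$, since there the single ``column sum'' factor would be replaced by a character-weighted column sum that vanishing of the plain sum does not control.
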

\begin{proof}
    1 $\Rightarrow$ 2: $S^{x,x'}$ being $G$ invariant implies that for all $\{z_i\}, \{u_i\}, \{v_i\}$,
    \begin{align*}
        & |\{y: z_i g_i(x^i, y^i) = u_i,\ z_i g_i(x'^i, y^i) = v_i, \forall i\}| \\
        = & |\{y: g_i(x^i, y^i) = u_i,\ g_i(x'^i, y^i) = v_i, \forall i\}|.
    \end{align*}
    Take $x' = x$ and $u = v$. Now for each $i$ and each row $x^i$, take $z_i = -1$ (where the group $\mbZ_2$ is represented by $\{\pm 1\}$). For all other $i'\neq i$, take $z_{i'} = 1$. This assignment will show that
    \begin{equation*}
        |\{y: g_i(x^i, y^i) = - u_i\}| = |\{y: g_i(x^i, y^i) = u_i\}|.
    \end{equation*}
    That is, the row $x^i$ in matrix $[g_i(x^i,y^i)]$ is balanced. Similarly we can show the balance for each column.

    2 $\Rightarrow$ 1: It is enough to show that for each $i$ and each $\{z_i\}, \{u_i\}, \{v_i\}$,
    \begin{align*}
        & |\{y^i: z_i g_i(x^i, y^i) = u_i,\ z_i g_i(x'^i, y^i) = v_i\}| \\
        = & |\{y^i: g_i(x^i, y^i) = u_i,\ g_i(x'^i, y^i) = v_i\}|.
    \end{align*}
    First consider the case $x'^i = x^i$. If $u_i \neq v_i$ then both numbers are 0; if $u_i = v_i$ then both numbers are $|\{y^i\}|/2$ by the balance of row $x^i$. Now assume $x'^i \neq x^i$. Denote $a_{bb'} = |\{y^i: g_i(x^i, y^i) = b,\ g_i(x^i, y^i) = b'\}|$, then the above requirement amounts to $a_{00} = a_{11}$ and $a_{01} = a_{10}$.

    Note that we have
    \begin{align*}
        a_{00} + a_{01} & = |\{y^i: g_i(x^i, y^i) = 0\}| = |\{y^i: g_i(x^i, y^i) = 1\}| = a_{10} + a_{11}
    \end{align*}
    where the second equality is due to the balance of row $x^i$. And similarly we have $a_{00} + a_{10} = a_{01} + a_{11}$ by balance of row $x'^i$. Combining the two, we get $a_{00} = a_{11}$ and $a_{01} = a_{10}$ as desired.
\end{proof}

It is worth noting that the conclusion does not hold if any group $G_i$ with size larger than two. We omit the counterexamples here.

\bibliographystyle{alpha}
\bibliography{complexity}

\newcommand{\etalchar}[1]{$^{#1}$}
\begin{thebibliography}{BBC{\etalchar{+}}01}

\bibitem[BBC{\etalchar{+}}01]{BBCMW01}
R.~Beals, H.~Buhrman, R.~Cleve, M.~Mosca, and {R. de} Wolf.
\newblock Quantum lower bounds by polynomials.
\newblock {\em Journal of the ACM}, 48(4):778--797, 2001.
\newblock Earlier version in FOCS'98.

\bibitem[BCW98]{BCW98}
H.~Buhrman, R.~Cleve, and A.~Wigderson.
\newblock Quantum vs. classical communication and computation.
\newblock In {\em Proceedings of the 30th ACM Symposium on the Theory of
  Computing}, pages 63--68, 1998.

\bibitem[BHN09]{BHN09}
P.~Beame and D.~Huynh-Ngoc.
\newblock Multiparty communication complexity and threshold circuit size of
  $\mathrm{AC}^0$.
\newblock In {\em Proceedings of the 50th IEEE Symposium on Foundations of
  Computer Science}, pages 53--62, 2009.

\bibitem[BNS92]{BNS92}
L.~Babai, N.~Nisan, and M.~Szegedy.
\newblock Multiparty protocols, pseudorandom generators for {L}ogspace, and
  time-space trade-offs.
\newblock {\em Journal of Computer and System Sciences}, 45:204--232, 1992.

\bibitem[BW02]{BW02}
H.~Buhrman and {R. de} Wolf.
\newblock Complexity measures and decision tree complexity: A survey.
\newblock {\em Theoretical Computer Science}, 288:21--43, 2002.

\bibitem[CA08]{CA08}
A.~Chattopadhyay and A.~Ada.
\newblock Multiparty communication complexity of disjointness.
\newblock Technical Report TR-08-002, ECCC, 2008.

\bibitem[Cha07]{Cha07}
A.~Chattopadhyay.
\newblock Discrepancy and the power of bottom fan-in depth-three circuits.
\newblock In {\em Proceedings of the 48th IEEE Symposium on Foundations of
  Computer Science}, pages 449--458, 2007.

\bibitem[Cha08]{Cha08}
A.~Chattopadhyay.
\newblock PhD thesis, Mc{G}ill University, 2008.

\bibitem[KN97]{KN97}
E.~Kushilevitz and N.~Nisan.
\newblock {\em Communication Complexity}.
\newblock Cambridge University Press, 1997.

\bibitem[LMSS07]{LMSS07}
N.~Linial, S.~Mendelson, G.~Schechtman, and A.~Shraibman.
\newblock Complexity measures of sign matrices.
\newblock {\em Combinatorica}, 27(4):439--463, 2007.

\bibitem[LS88]{LS88}
L.~Lov\'asz and M.~Saks.
\newblock M{\"o}bius functions and communication complexity.
\newblock In {\em Proceedings of the 29th IEEE Symposium on Foundations of
  Computer Science}, pages 81--90, 1988.

\bibitem[LS09a]{LS09}
T.~Lee and A.~Shraibman.
\newblock Disjointness is hard in the multiparty number-on-the-forehead model.
\newblock {\em Computational Complexity}, 18(2):309--336, 2009.

\bibitem[LS09b]{LS09d}
T.~Lee and A.~Shraibman.
\newblock Lower bounds in communication complexity.
\newblock {\em Foundations and Trends in Theoretical Computer Science}, 3,
  2009.

\bibitem[LS09c]{LS09b}
N.~Linial and A.~Shraibman.
\newblock Learning complexity versus communication complexity.
\newblock {\em Combinatorics, Probability, and Computing}, 18:227--245, 2009.

\bibitem[LS09d]{LS09c}
N.~Linial and A.~Shraibman.
\newblock Lower bounds in communication complexity based on factorization
  norms.
\newblock {\em Random Structures and Algorithms}, 34:368--394, 2009.

\bibitem[LS{\v{S}}08]{LSS08}
T.~Lee, A.~Shraibman, and R.~{\v{S}}palek.
\newblock A direct product theorem for discrepancy.
\newblock In {\em Proceedings of the 23rd IEEE Conference on Computational
  Complexity}, pages 71--80. IEEE, 2008.

\bibitem[Nis94]{Nis94}
Noam Nisan.
\newblock The communication complexity of threshold gates.
\newblock In {\em In Proceedings of Combinatorics, Paul Erdos is Eighty}, pages
  301--315, 1994.

\bibitem[NS94]{NS94}
N.~Nisan and M.~Szegedy.
\newblock On the degree of {B}oolean functions as real polynomials.
\newblock {\em Computational Complexity}, 4:301--313, 1994.

\bibitem[Raz99]{Raz99}
R.~Raz.
\newblock Exponential separation of quantum and classical communication
  complexity.
\newblock In {\em Proceedings of the 31st ACM Symposium on the Theory of
  Computing}, pages 358--367, 1999.

\bibitem[Raz03]{Raz03}
A.~Razborov.
\newblock Quantum communication complexity of symmetric predicates.
\newblock {\em Izvestiya: Mathematics}, 67(1):145--159, 2003.

\bibitem[RS08]{RS08}
A.~Razborov and A.~Sherstov.
\newblock The sign rank of $\mathrm{AC}^0$.
\newblock In {\em Proceedings of the 49th IEEE Symposium on Foundations of
  Computer Science}, pages 57--66, 2008.

\bibitem[Sha03]{Sha03}
R.~Shaltiel.
\newblock Towards proving strong direct product theorems.
\newblock {\em Computational Complexity}, 12(1--2):1--22, 2003.

\bibitem[She07]{She07}
A.~Sherstov.
\newblock Separating $\mathrm{AC}^0$ from depth-2 majority circuits.
\newblock In {\em Proceedings of the 39th ACM Symposium on the Theory of
  Computing}, pages 294--301. ACM, 2007.

\bibitem[She08a]{She08c}
A.~Sherstov.
\newblock Communication lower bounds using dual polynomials.
\newblock {\em Bulletin of the {EATCS}}, 95:59--93, 2008.

\bibitem[She08b]{She08b}
A.~Sherstov.
\newblock The unbounded-error communication complexity of symmetric functions.
\newblock In {\em Proceedings of the 49th IEEE Symposium on Foundations of
  Computer Science}, 2008.

\bibitem[She09]{She09b}
A.~Sherstov.
\newblock The pattern matrix method.
\newblock {\em SIAM Journal on Computing}, 2009.

\bibitem[She10]{She10}
A.~Sherstov.
\newblock On quantum-classical equivalence for composed communication problems.
\newblock {\em Quantum Information and Computation}, 10(5-6):435--455, 2010.

\bibitem[SZ09a]{SZ09}
Y.~Shi and Z.~Zhang.
\newblock Communication complexities of {XOR} functions.
\newblock {\em Quantum information and computation}, 9(3-4):255--263, 2009.

\bibitem[SZ09b]{SZ09b}
Y.~Shi and Y.~Zhu.
\newblock Quantum communication complexity of block-composed functions.
\newblock {\em Quantum information and computation}, 9(5,6):444--460, 2009.

\bibitem[Yao79]{Yao79}
A.~Yao.
\newblock Some complexity questions related to distributive computing.
\newblock In {\em Proceedings of the 11th ACM Symposium on the Theory of
  Computing}, pages 209--213, 1979.

\bibitem[Zha09]{Zha09}
S.~Zhang.
\newblock On the tightness of the {B}uhrman-{C}leve-{W}igderson simulation.
\newblock In {\em Proceedings of the 20th International Symposium on Algorithms
  and Computation}, pages 434--440, 2009.

\end{thebibliography}

\end{document}